\documentclass[11pt]{article}

\usepackage[final]{acl}
\usepackage{afterpage}
\usepackage{times}
\usepackage{latexsym}
\usepackage[T1]{fontenc}

\usepackage[utf8]{inputenc}

\usepackage{microtype}
\usepackage{adjustbox}
\usepackage{inconsolata}
\usepackage{hyperref}
\usepackage{fontawesome5}

\usepackage{graphicx}
\usepackage{placeins}
\usepackage{microtype}
\usepackage{subcaption}
\usepackage{rotating}
\usepackage{tcolorbox}
\tcbuselibrary{breakable}
\usepackage{multirow} 
\usepackage{makecell}
\usepackage{booktabs} 

\usepackage{amsmath}
\usepackage{amssymb}
\usepackage{mathtools}
\usepackage{algorithmic}
\usepackage{algorithm}
\usepackage{amsthm}
\usepackage{booktabs}

\usepackage[nolist,nohyperlinks]{acronym}
\acrodef{sac}[SAC]{Self-Anchored Consensus}
\usepackage{cleveref}
\usepackage[table]{xcolor}
\usepackage[dvipsnames]{xcolor}
\crefformat{problem}{Problem~#2#1#3}
\crefformat{assumption}{Assumption~#2#1#3}

\theoremstyle{plain}
\newtheorem{theorem}{Theorem}[section]

\newtheorem{lemma}[theorem]{Lemma}

\theoremstyle{definition}
\newtheorem{definition}[theorem]{Definition}

\newtheorem{remark}[theorem]{Remark}
\newtheorem{property}[theorem]{Property}

\newcommand{\Acal}{\mathcal{A}}
\newcommand{\Bcal}{\mathcal{B}}

\newcommand{\Ecal}{\mathcal{E}}
\newcommand{\Fcal}{\mathcal{F}}
\newcommand{\Gcal}{\mathcal{G}}
\newcommand{\Hcal}{\mathcal{H}}

\newcommand{\Ncal}{\mathcal{N}}

\newcommand{\Rcal}{\mathcal{R}}
\newcommand{\Scal}{\mathcal{S}}

\newcommand{\Vcal}{\mathcal{V}}

\title{Robust Multi-Agent LLMs under Byzantine Faults}

\author{
Haejoon Lee\textsuperscript{1}$^{,\dag}$,
Vincent-Daniel Yun\textsuperscript{2}$^{,\dag}$,
Dimitra Panagou\textsuperscript{1},
Sai Praneeth Karimireddy\textsuperscript{2}
\\ \\
\textsuperscript{1}Department of Robotics, University of Michigan, Ann Arbor \\
\textsuperscript{2}Thomas Lord Department of Computer Science, University of Southern California
\\
 \small{
   \textbf{Correspondence:} Sai Praneeth Karimireddy <karimire@usc.edu>
} \\ \\
\href{https://github.com/daniel-eai/Robust-Multi-Agent-LLMs-under-Byzantine-Faults}
{\faGithub\ GitHub}
}

\begin{document}
\maketitle

\begin{abstract}
Large language model (LLM) agents increasingly collaborate over peer-to-peer networks to improve their reliability. However, these same interactions can also introduce vulnerability to unreliable or Byzantine agents that can propagate incorrect information and degrade overall system performance. To address this, we propose Self-Anchored Consensus (SAC), a fully decentralized filter-and-refine protocol in which agents iteratively exchange responses, locally evaluate and filter unreliable messages, and refine their own outputs. We present $(F{+}1)$-robustness conditions on the communication graph that ensure honest agents preserve and propagate reliable information despite Byzantine influence. Experiments across diverse open- and closed-weight LLMs on mathematical and commonsense reasoning benchmarks show that SAC effectively suppresses Byzantine influence and consistently improves performance across diverse communication topologies, whereas prior methods degrade significantly under Byzantine attacks.
 \begingroup
\renewcommand\thefootnote{}
\footnote{$^{\dag}$These authors contributed equally.}
\addtocounter{footnote}{-1}
\endgroup
\end{abstract}

\section{Introduction}
Large language models (LLMs) are increasingly deployed as communicating agents in multi-agent systems (MAS), where multiple agents exchange responses, critique one another, and converge on collective answers~\citep{guo2024large,du2023improving}. We consider a peer-to-peer setting in which LLM agents collaboratively solve tasks with objectively verifiable answers through consensus.

As these systems move toward real-world deployment, a central challenge is \emph{robustness} to faulty and adversarial agents. Individual agents may hallucinate, miscalculate, or strategically inject misleading responses. Since agents cannot locally distinguish trustworthy messages from Byzantine ones (Fig.~\ref{fig:hero}), unreliable responses can contaminate otherwise reliable agents. The key question is therefore whether Byzantine influence can be \textit{contained} so that honest agents preserve their base capability.

\begin{figure}
\centering
\includegraphics[width=0.95\linewidth]{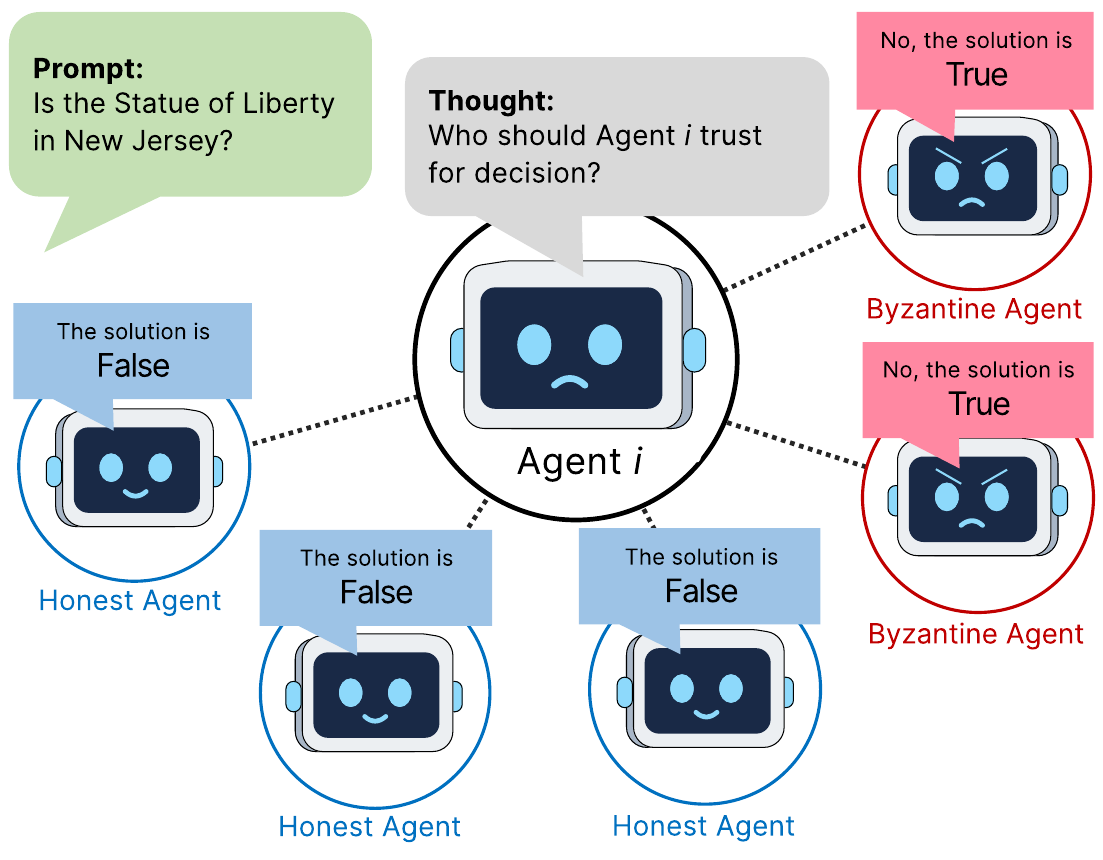}
\caption{Decision ambiguity in the presence of Byzantine agents. An agent must decide which responses to trust among indistinguishable neighbor messages.}
\label{fig:hero}
\end{figure}

This setting naturally connects to classical Byzantine-resilient consensus, where some agents behave arbitrarily while honest nodes must still reach correct agreement~\citep{lamport1982byzantine, leblanc2013resilient, su2021byzantine}. Viewing LLM-MAS through this lens raises two questions: (i) how should agents decide which neighbors to trust, and (ii) how should the communication graph be designed to limit Byzantine influence?

Recent methods such as CP-WBFT~\citep{zheng2026rethinking} aggregate neighbor responses using self-reported confidence scores. However, this implicitly assumes Byzantine agents report honest reliability signals, departing from the standard Byzantine setting where adversaries may falsify confidence values. Moreover, communication graphs are typically selected heuristically from generic topologies without principled fault-tolerance guarantees, resulting in unstable performance across different network structures~\citep{zheng2026rethinking}.

We address these limitations by drawing on classical Byzantine-resilient consensus. Specifically, we adapt Mean-Subsequence-Reduced (MSR) filtering~\citep{dolev1986reaching, kiechhafer1994reaching} and $r$-robustness~\citep{leblanc2013resilient} to LLM-MAS by (i) replacing sender-reported confidence with receiver-side evaluations computed independently by each agent, and (ii) introducing an MSR-inspired filter-and-refine mechanism for iterative response refinement. We further design communication graphs under which honest agents remain robust to Byzantine influence.

\paragraph{Contributions.}
Our contributions are twofold. First, we introduce~\acf{sac}, a decentralized MSR-inspired algorithm that enables robust LLM collaboration through receiver-side confidence evaluation and iterative filter-and-refine updates tailored to natural-language interactions. Second, we establish graph-theoretic robustness conditions under which honest agents contain Byzantine influence under~\ac{sac}. Experiments show that our method consistently preserves honest-agent capability across diverse communication topologies, whereas prior approaches collapse catastrophically under Byzantine faults.

\section{Related Works}
\label{sec:related}
\paragraph{LLM multi-agent systems.}
Multiple LLM instances coordinated as agents have been shown to outperform single-agent inference through debate~\citep{du2023improving,liang2024encouraging} and role-based frameworks~\citep{wu2023autogen,hong2024metagpt,guo2024large}. A growing body of work studies the reliability cost when such agents behave incorrectly~\citep{tian2023evil,yu2024netsafe,zhang2024cutcrap}. Our work takes this failure mode as its starting point and asks what aggregation rule and graph structure are needed to contain it.
 
\paragraph{Byzantine-robust consensus in LLM-MAS.}
Three recent methods are closest to ours. CP-WBFT~\citep{zheng2026rethinking} relies on self-reported confidence to select high-confidence responses, implicitly assuming Byzantine agents would honestly report their confidence. It also evaluates their method across the six hand-picked graphs, with wildly varying performance. Trusted MultiLLMN~\citep{luo2025weighted} uses a leader-based BFT protocol in which consecutive Byzantine leaders force expensive re-elections. DecentLLMs~\citep{jo2025byzantine} removes the leader via a geometric-mean filter over evaluator scores, which is brittle under high-variance scoring and requires a fully-connected evaluator-worker graph. Our method differs on two axes at once: confidence is computed on the \emph{receiver} side, so Byzantine agents cannot manipulate how others evaluate their output; and our communication topology is grounded on mathematically-driven robustness properties that limit Byzantine influence on general graphs.
 
\paragraph{Classical Byzantine-resilient algorithms.}
The Byzantine agreement problem~\citep{lamport1982byzantine, dolev1986reaching} has been extensively studied to ensure reliability under arbitrary node failures, inspiring a broad line of work on fault-tolerant distributed computation. In particular, the \textit{Weighted Mean-Subsequence-Reduced} (W-MSR) algorithm, together with $r$- and $(r,s)$-robustness conditions~\citep{leblanc2013resilient}, provides consensus guarantees despite Byzantine agents. These notions have been extended beyond consensus to distributed optimization~\citep{sundaram2019distributed_opt, yuan2025resilient} and distributed learning~\citep{xie2023communication, ye2024resilient}, with applications in robotics and smart grids~\citep{lee2025distributed, yuan2025resilient}. To our knowledge, this work is the first to bring MSR-type algorithms and robustness conditions to design Byzantine-resilient LLM-MAS.

\section{Problem Setup}
\label{sec:problem}

\subsection{LLM Multi-Agent System}

We consider a multi-agent system of $n$ LLM agents, indexed by $\Vcal=\{1,\dots,n\}$, collaboratively answering a query $x$ drawn from a task distribution $\mathcal{D}$. Agents exchange messages over an undirected, time-invariant communication graph $\mathcal{G}=(\Vcal,\mathcal{E})$, where an edge $(i,j)\in\mathcal{E}$ indicates bidirectional exchange between agents $i$ and $j$, and $\Ncal_i=\{j\in\Vcal \mid (i,j)\in\mathcal{E}\}$ denotes the neighbor set of agent $i$. We also define a directed graph $\Gcal'=(\Vcal,\Ecal')$, where a directed edge $(i,j)\in\Ecal'$ indicates that $j$ receives messages from $i$. A directed graph contains a rooted out-branching if there exists $p\in\Vcal$ that can reach all other nodes. Let $\mathbb{Z}_{\geq 0}$ denote the non-negative integers.

At each time step $t\in \mathbb Z_{\geq 0}$, agents communicate with their neighbors. Each agent $i$ is associated with an underlying LLM $\mathcal{M}_i$ and produces a response $r_i^{(t)}\in\mathcal{Y}$ to a query $x$, where $\mathcal{Y}$ is the space of admissible outputs (e.g., numerical answers or binary safety labels). We assume that each query $x$ admits a well-defined ground-truth answer $y^*\in \mathcal{Y}$, allowing responses to be classified as correct ($r_i^{(t)}=y^*$) or incorrect ($r_i^{(t)}\neq y^*$) at each round $t$.

\subsection{Threat Model}

We consider a setting in which a subset of agents are Byzantine, adapting the definition of~\citet{leblanc2013resilient} to LLM-MAS:

\begin{figure}
    \centering
\includegraphics[width=0.7\linewidth]{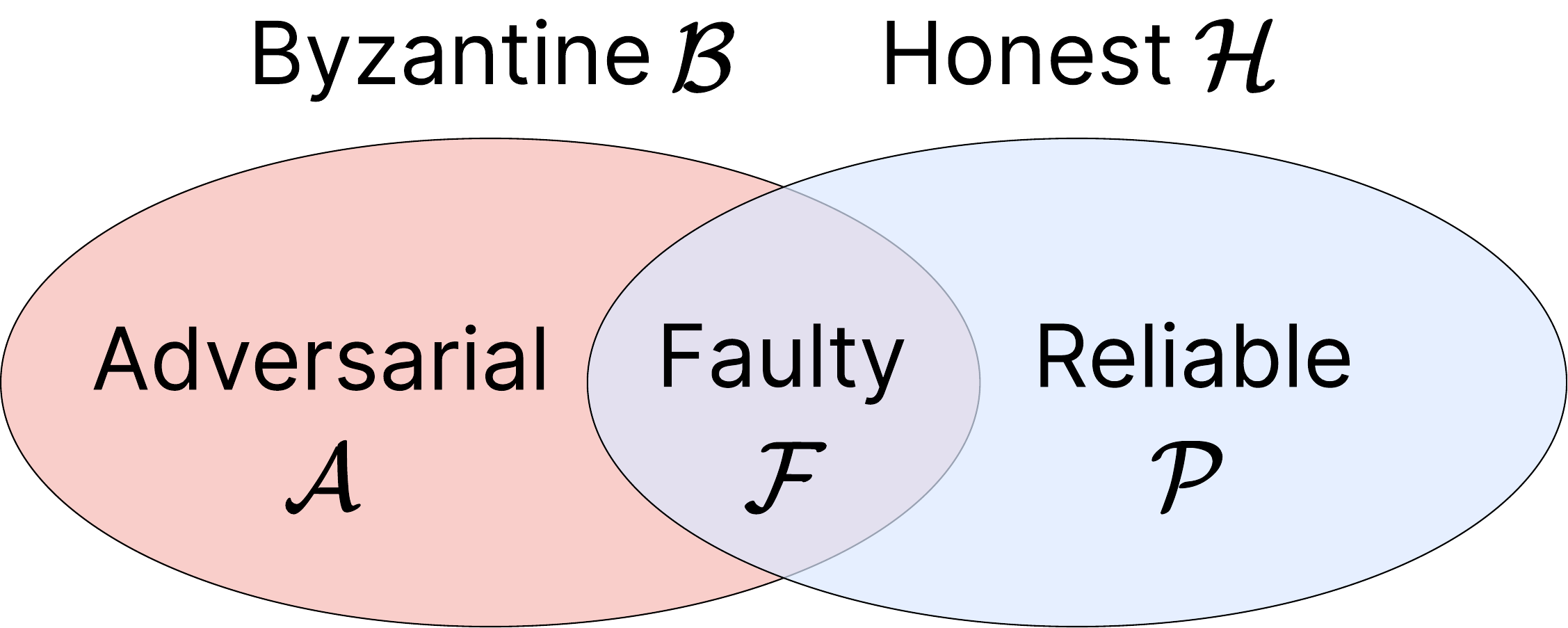}
    \caption{Venn diagram showing the partitioning of agents into reliable agents, faulty agents, and adversarial agents, among honest and Byzantine agents.}
    \label{fig:different_agents}
\end{figure}


\begin{definition}[Byzantine agent]
An agent $i \in \Bcal\subset \Vcal$ is \emph{Byzantine} if, in some round $t$, it deviates from the ideal behavior and transmits unreliable information to neighbors. This may occur as follows:
\begin{itemize}
    \item \emph{Faulty Byzantine:} the agent $b \in \Fcal \subset \Bcal$ follows the prescribed protocol, but produces incorrect or noisy responses due to internal errors, limited capability, or uncertainty.
    \item \emph{Adversarial Byzantine:} the agent $b\in \Acal \subset \Bcal$ does not follow the prescribed protocol by acting arbitrarily, e.g., producing incorrect responses ($r_i^{(t)} \neq y^*$) or sending inconsistent information to different neighbors.
\end{itemize}
\end{definition}

Byzantine agents consist of both faulty agents $\Fcal$ and adversarial agents $\Acal$ such that $\Acal\cap \Fcal=\emptyset$ and $\Acal\cup \Fcal=\Bcal$. Their behaviors may arise from systematic errors (e.g. hallucinations, stochastic errors) as well as adversarial manipulation, which are indistinguishable to non-Byzantine agents.

The remaining agents $\mathcal P:=\Vcal\setminus \Bcal$ are referred to as reliable agents. We assume identities of Byzantine agents are unknown to reliable agents.  Furthermore, we define the set of honest agents as $\Hcal := \Vcal \setminus \Acal$, i.e., non-adversarial agents. The relationships among these agent classes are illustrated in Fig.~\ref{fig:different_agents}. To quantify the extent of Byzantine (both faulty and adversarial) agents, we adopt the $F$-local model~\citep{leblanc2013resilient}:


\begin{definition}[$F$-local~\cite{leblanc2013resilient}]
A set $\Scal\subset\Vcal$ is $F$-local if every node outside $\Scal$ has at most $F$ neighbors in $\Scal$, i.e., $|\Ncal_i\cap\Scal|\leq F$ for all $i\in\Vcal\setminus\Scal$.
\end{definition}

\subsection{Graph Robustness}

We now introduce the notion of $r$-robustness~\citep{leblanc2013resilient}:



\begin{definition}[$r$-robust~\cite{leblanc2013resilient}]
    A graph $\Gcal = (\Vcal,\Ecal)$ is \emph{$r$-robust} if for every pair of nonempty, disjoint subsets $\Scal_1,\Scal_2 \subset \Vcal$, at least one of the subsets contains a node with at least $r$ neighbors outside the subset. That is, there exists a node $i \in \Scal_k$ such that $|\Ncal_i \setminus \Scal_k| \geq r$ for some $k \in \{1, 2\}$.
\end{definition}

Higher robustness levels limit the influence of more Byzantine agents on reliable ones but require denser communication graphs. We now present a useful property of $r$-robustness:
\begin{lemma}
    Given an $r$-robust graph $\Gcal=(\Vcal, \Ecal)$, let $\Gcal'=(\Vcal, \Ecal')$ be a directed graph produced by removing up to $k$ incoming edges of each agent in $\Gcal$. Then, $\Gcal'$ is $(r-k)$-robust. 
    \label{lem:leblanc}
\end{lemma}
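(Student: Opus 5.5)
The plan is to argue directly from the definition of $r$-robustness. For the directed graph $\Gcal'$ we use the natural reading: "at least $r-k$ neighbors outside the subset" means \emph{in-neighbors}, i.e., nodes from which $i$ still receives messages in $\Gcal'$. Write $\Ncal_i' := \{j \in \Vcal \mid (j,i)\in\Ecal'\}$. Because $\Gcal'$ is obtained from the undirected graph $\Gcal$ by deleting at most $k$ incoming edges at each node, we have $\Ncal_i' \subseteq \Ncal_i$ and $|\Ncal_i \setminus \Ncal_i'| \le k$ for every $i\in\Vcal$. This containment and the counting bound are the only structural facts we need.

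Fix an arbitrary pair of nonempty, disjoint subsets $\Scal_1,\Scal_2\subset\Vcal$. Since $\Gcal$ is $r$-robust, there exist $k_0\in\{1,2\}$ and a node $i\in\Scal_{k_0}$ with $|\Ncal_i\setminus\Scal_{k_0}|\ge r$. We then compare the outside neighborhoods in $\Gcal$ and in $\Gcal'$. We have
\begin{equation*}
\Ncal_i\setminus\Scal_{k_0} \subseteq (\Ncal_i'\setminus\Scal_{k_0}) \cup (\Ncal_i\setminus\Ncal_i'),
\end{equation*}
and therefore
\begin{equation*}
|\Ncal_i'\setminus\Scal_{k_0}| \ge |\Ncal_i\setminus\Scal_{k_0}| - |\Ncal_i\setminus\Ncal_i'| \ge r-k.
\end{equation*}
So the same node $i$ witnesses the $(r-k)$-robustness condition in $\Gcal'$ for this pair. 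Since the pair $\Scal_1,\Scal_2$ was arbitrary, $\Gcal'$ is $(r-k)$-robust.

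The argument has no real obstacle. The only delicate point is making precise what $(r-k)$-robustness means for a directed graph. The definition in the paper is stated for undirected graphs, so I would state explicitly that in $\Gcal'$ "neighbors" means in-neighbors, which is the convention of \citet{leblanc2013resilient}. Under that convention the removed edges are exactly the ones subtracted in the count above.

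A second minor point concerns the case $r\le k$. There the conclusion is vacuous, because every node trivially has at least $r-k\le 0$ outside in-neighbors, so no separate treatment is needed.
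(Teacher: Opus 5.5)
Your argument is correct, and your in-neighbor convention matches the paper's definition of $\Gcal'$: the witness node $i$ supplied by the $r$-robustness of $\Gcal$ loses at most $k$ of its outside in-neighbors, so the same node certifies $(r-k)$-robustness of $\Gcal'$. The paper does not prove this lemma itself but imports it from LeBlanc et al., and your counting argument is the standard proof there.
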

The notion of $r$-robustness is a preferable graph-theoretic condition against Byzantine agents, as it can achieve the same level of resilience as classical metrics such as connectivity or minimum degree with sparser connectivity~\citep{leblanc2013resilient, pirani2023}. This sparsity is particularly valuable in LLM-MAS, where each edge in $\mathcal{E}$ corresponds to an LLM-to-LLM message exchange and thus directly contributes to per-round inference cost. We later leverage this notion of robustness to design communication networks that systematically limit Byzantine influence in LLM-MAS.






\begin{figure*}[t]
\centering
\includegraphics[width=\linewidth]{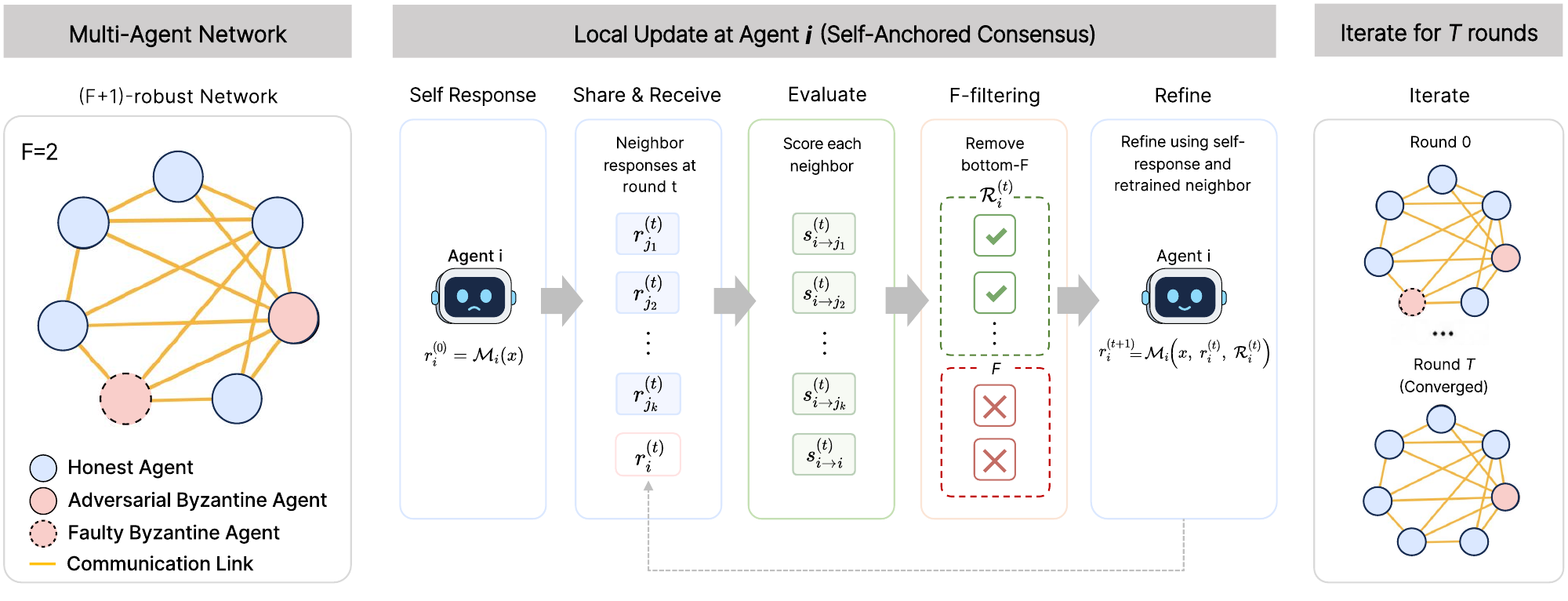}
\caption{Overview of Self-Anchored Consensus (SAC) on an $(F{+}1)$-robust network. At each round, agent $i$ generates its own response, exchanges responses with neighbors, scores them itself, filters out the bottom-$F$ neighbors, and refines its response by re-prompting on the retained set. The procedure is iterated for $T$ rounds.}

\label{fig:method}
\end{figure*}

\subsection{Problem Statement}

Given $n$ LLM agents, a query $x$, and an $F$-local Byzantine set $\Bcal=\Fcal\cup\Acal$, our objective is twofold: (i) ensure that reliable agents $\mathcal P$ remain robust to Byzantine messages, and (ii) enable collaborative refinement that helps all honest agents $\Hcal=\mathcal P\cup\Fcal$ move toward the correct answer in a fully decentralized peer-to-peer network.

We address this in two steps: first, we develop a decentralized protocol where agents iteratively refine responses using neighbor messages while filtering Byzantine influence; second, we use $r$-robustness to characterize communication graphs under which honest agents can reliably refine their responses.

\begin{remark}
Adversarial agents $\Acal$ may arbitrarily deviate from the protocol, whereas faulty agents $\Fcal$ follow the protocol but may produce unreliable outputs. Although both are Byzantine, faulty agents can be viewed as \emph{temporary} Byzantine agents, since iterative refinement may eventually correct their responses and restore reliable behavior.
\end{remark}

\section{Method}
\label{sec:method}


\subsection{\acf{sac}}
\label{sec:confidence}
Here we describe \acf{sac} (visualized in Fig.~\ref{fig:method}), with which each agent $i\in \Vcal$ iteratively refines its response by incorporating trusted neighbor responses while filtering potential Byzantine information through four steps.
Given query $x$, every agent $i\in\Vcal$ first produces an initial response and evaluates its own response:
\begin{equation*}
r_i^{(0)} \;=\; \mathcal{M}_i(x),   \quad   s_{i\to i}^{(0)}=\phi_i(x, r_i^{(0)})\in [0,1],
\end{equation*}
where $\phi_i$ is a scoring function implemented by agent $i$'s own LLM. In our implementation, $\phi_i$ is realized via a verification prompt that asks $\mathcal{M}_i$ to judge the correctness of a given response with respect to $x$, returning a calibrated score in $[0,1]$; the full template is given in~\Cref{app:prompts}.

At each subsequent round $t\geq 0$, every agent $i\in\Vcal$ performs the following steps.

\noindent \textbf{Broadcast responses.} Every agent $i$ broadcasts its current response $r_i^{(t)}$ to its neighbors.

\noindent\textbf{Score neighbors.} Upon receiving neighbor responses $\{r_j^{(t)}\}_{j\in\mathcal{N}_i}$, agent $i$ computes, for each neighbor $j\in\mathcal{N}_i$, an evaluation score
\begin{equation}
s_{i\to j}^{(t)} \;=\; \phi_i\bigl(x,\, r_j^{(t)}\bigr) \,\in\, [0,1].
\label{eq:eval}
\end{equation}

The key distinction from prior work~\cite{zheng2026rethinking} is that $s_{i\to j}^{(t)}$ depends only on agent $i$'s own computation applied to the content of $r_j^{(t)}$, and in particular does \emph{not} rely on any score reported by agent $j$. This removes the possibility for a Byzantine agent to manipulate its own reliability by reporting an inflated self-confidence.

\noindent\textbf{Filter.} Let $\mathcal{L}_i^{(t)}\subseteq\mathcal{N}_i$ be the set of neighbors whose score is strictly below agent $i$'s self-score,
\begin{equation}
\mathcal{L}_i^{(t)} \;=\; \bigl\{\, j\in\mathcal{N}_i \,\bigm|\, s_{i\to j}^{(t)} < s_{i\to i}^{(t)} \,\bigr\}.
\end{equation}
Agent $i$ then discards the $\min(F, |\mathcal{L}_i^{(t)}|)$ lowest-scoring elements of $\mathcal{L}_i^{(t)}$, yielding the retained set
\begin{equation}
\mathcal{R}_i^{(t)} \;=\; \mathcal{N}_i\setminus\mathrm{bottom}_F(\mathcal{L}_i^{(t)}),
\end{equation}
where $\mathrm{bottom}_F(\cdot)$ returns the $F$ lowest-scoring elements of its argument, or the entire argument when $|\mathcal{L}_i^{(t)}|\leq F$.

\noindent\textbf{Refine.} Agent $i$ refines its response by conditioning $\mathcal{M}_i$ on its current response and the retained neighbor responses, weighted by the corresponding receiver-side scores:
\begin{equation}
r_i^{(t+1)} \;=\; \mathcal{M}_i\Bigl( x,\; r_i^{(t)},\; \bigl\{(r_j^{(t)},\, s_{i\to j}^{(t)})\bigr\}_{j\in\mathcal{R}_i^{(t)}} \Bigr).
\label{eq:refine}
\end{equation}
After refinement, agent $i$ re-evaluates its updated response using $\phi_i$ to obtain an updated self-evaluation score.
The prompt template for Eq.~\eqref{eq:refine} instructs $\mathcal{M}_i$ to prefer high-scored responses while retaining the option to keep its own response if none of the retained neighbor responses improves upon it; see~\Cref{app:prompts} for the full template.

After $T$ rounds (a user-specified parameter), the final response of agent $i\in \Vcal$ is $r_i^{(T)}$. The overall procedure of SAC is summarized in Algorithm~\ref{alg:main}.


\begin{algorithm}[h]
\caption{\acf{sac}}
\label{alg:main}
\begin{algorithmic}[1]
\REQUIRE query $x$; adversary bound $F$; number of rounds $T$
\STATE each agent $i$ computes $r_i^{(0)}=\mathcal{M}_i(x)$ and $s_{i \to i}^{(0)}=\phi_i(x,r_i^{(0)})$
\FOR{$t=0,1,\dots,T-1$}
    \STATE each agent $i$ broadcasts $r_i^{(t)}$ to $\mathcal{N}_i$
    \STATE compute $s_{i\to j}^{(t)}$ for all $j\in\mathcal{N}_i$ via Eq.~\eqref{eq:eval}
    \STATE $\mathcal{L}_i^{(t)}\leftarrow\{j\in\mathcal{N}_i:s_{i\to j}^{(t)}<s_{i\to i}^{(t)}\}$
    \STATE $\mathcal{R}_i^{(t)}\leftarrow\mathcal{N}_i\setminus\mathrm{bottom}_F(\mathcal{L}_i^{(t)})$
    \STATE update $r_i^{(t+1)}$ via Eq.~\eqref{eq:refine} and obtain $s_{i \to i}^{(t+1)}$
\ENDFOR
\STATE \textbf{return} $r_i^{(T)}$
\end{algorithmic}
\end{algorithm}

\begin{remark}
Note that~\Cref{alg:main} is fully decentralized; each LLM agent operates using only local information from its neighbors and does not require knowledge of the global network topology $\Gcal$ or the identities of Byzantine agents.
\end{remark}

\begin{remark}
On top of generating its own answer to the given problem, \ac{sac} incurs additional inference cost by independently evaluating neighbors' responses before refinement. With $n$ agents and $T$ communication rounds, \ac{sac} requires $n(1+2T)$ LLM calls: one initialization call and two calls per round for batched scoring and refinement. Since the scoring stage evaluates all received responses in a single prompt, the number of LLM invocations does not directly scale with the neighborhood size, although token usage and inference latency do increase with it.
\label{remark:computational_complexity}
\end{remark}

\subsection{Robust Topological Conditions}

While~\Cref{alg:main} is fully decentralized, its effectiveness depends on the connectivity of the underlying communication graph. Since the filtering step in line~6 of~\ac{sac} removes up to $F$ neighbors per round, an agent with $|\Ncal_i| \leq F$ may have no incoming information after filtering, reducing the algorithm to repeated self-updates. We therefore design the topology to satisfy $(F+1)$-robustness, which ensures two key properties.

\begin{property}
\label{prop:one_honest}
Consider an LLM-MAS connected through a communication graph $\Gcal=(\Vcal,\Ecal)$ with an $F$-local Byzantine set $\Bcal \subset \Vcal$. Let $\Scal_i^{(t)} = \{ \phi_i(x, r_j^{(t)}) \mid j \in \mathcal P \}$ denote the set of evaluation scores assigned by agent $i$ at time $t$ to responses from reliable agents. If $\Gcal$ is $(F+1)$-robust, then for every $k \in \Rcal_i^{(t)}$ and time $t\in \mathbb{Z}_{\geq 0}$, the evaluation score $s_{i\to k}^{(t)}$ is at least as high as the minimum evaluation score assigned to any reliable agent's response, i.e., $s_{i\to k}^{(t)}\geq \min \Scal_i^{(t)}, \ \forall k\in \Rcal_i^{(t)}$.
\end{property}
\begin{proof}
By Lemma 5 of~\citet{leblanc2013resilient}, since $\Gcal$ is $(F+1)$-robust, we have $|\Ncal_i|\geq F+1$ for all $i \in \Vcal$. Since the filtering step of~\ac{sac} removes at most $F$ neighbor indices corresponding to the lowest evaluation scores, and $|\Ncal_i| \ge F+1$, at least one neighbor remains in $\Rcal_i^{(t)}$, i.e., $\Rcal_i^{(t)} \neq \emptyset$. Furthermore, under the $F$-local model, and since at most $F$ indices are removed from the lowest end of the ordered scores, $\Rcal_i^{(t)}$ contains only agent indices whose responses have scores no lower than the minimum score among reliable agents.
\end{proof}

Property~\ref{prop:one_honest} guarantees that, after filtering, each agent retains at least one response whose quality is comparable to that of a reliable agent, \textit{even if the response does not originate from a reliable agent.} This allows honest agents to safely refine their outputs using information of comparable quality to that produced by reliable agents, despite the presence of Byzantine agents. 

\begin{property}
\label{prop:connected}
Consider an LLM-MAS connected through a communication graph $\Gcal=(\Vcal,\Ecal)$ with an $F$-local Byzantine set $\Bcal \subset \Vcal$. Let $\Gcal'=(\Vcal, \Ecal')$ be a directed graph obtained by removing up to any $F$ incoming edges of each agent in $\Gcal$. If $\Gcal$ is $(F+1)$-robust, then $\Gcal'$ contains a rooted-out branching.
\end{property}

Property~\ref{prop:connected} is a direct consequence of Lemma~\ref{lem:leblanc} and Lemma~7 of~\cite{leblanc2013resilient}. It guarantees that $F$-filtering does not disconnect the network, thereby preventing fragmentation into isolated subgroups that may converge to different or incorrect solutions. Therefore, under $(F+1)$-robustness, the subgraph induced by reliable agents remains connected even after filtering out potentially malicious messages.

\begin{table*}[t]
\centering
\small
\begin{adjustbox}{width=1\textwidth}
\begin{tabular}{lllrrrrrrrr}
\toprule
\textbf{Dataset} & \textbf{Method} & \textbf{Topology} & \textbf{IAA} & \textbf{FAA} & \textbf{BFTI} & \textbf{RA} &
\textbf{W IAA}$\to$\textbf{FAA} & \textbf{S IAA}$\to$\textbf{FAA} & \textbf{H-Majority} \\
\midrule
\multirow{6}{*}{MATH} & \multirow{3}{*}{CP-WBFT}
 & MERG          & 52.3\% & 43.6\% & $-8.7\%$  & 35.0\% & 25$\to$58\% & 79$\to$48\% & 38.0\% \\
& & Complete      & 52.7\% & 42.3\% & $-10.4\%$ & 43.0\% & 23$\to$49\% & 81$\to$50\% & 49.0\% \\
& & Erd\H{o}s-R\'enyi & 52.6\% & 35.1\% & $-17.5\%$ & 39.0\% & 25$\to$42\% & 80$\to$41\% & 42.0\% \\
\cmidrule{2-10}
&\multirow{3}{*}{SAC (Ours)}
 & MERG          & 50.1\% & 53.1\% & $+3.0\%$ & 79.0\% & 22$\to$26\% & 77$\to$80\% & 79.0\% \\
& & Complete      & 53.7\% & 54.1\% & $+0.4\%$ & 80.0\% & 26$\to$34\% & 81$\to$78\% & 78.0\% \\
& & Erd\H{o}s-R\'enyi & 53.3\% & 55.0\% & $+1.7\%$ & 79.0\% & 25$\to$38\% & 81$\to$77\% & 78.0\% \\
\midrule
\multirow{6}{*}{Commonsense} &\multirow{3}{*}{CP-WBFT}
 & MERG          & 69.5\% & 37.1\% & $-32.4\%$ & 26.7\% & 77$\to$52\% & 83$\to$39\% & 26.7\% \\
& & Complete   & 69.0\% & 14.3\% & $-54.7\%$ & 16.7\% & 72$\to$17\% & 85$\to$17\% & 16.7\% \\
& & Erd\H{o}s-R\'enyi & 69.0\% & 20.5\% & $-48.5\%$ & 23.3\% & 75$\to$25\% & 83$\to$23\% & 23.3\% \\
\cmidrule{2-10}
&\multirow{3}{*}{SAC (Ours)}
 & MERG          & 68.6\% & 66.7\% & $-1.9\%$  & 76.7\% & 73$\to$77\% & 83$\to$78\% & 80.0\% \\
& & Complete   & 69.5\% & 69.5\% & $+0.0\%$  & 83.3\% & 77$\to$80\% & 83$\to$82\% & 83.3\% \\
& & Erd\H{o}s-R\'enyi & 69.0\% & 66.2\% & $-2.8\%$  & 73.3\% & 73$\to$75\% & 84$\to$78\% & 76.7\% \\
\bottomrule
\end{tabular}
\end{adjustbox}
\caption{Comparison of CP-WBFT and SAC across three communication topologies on
two reasoning benchmarks: 100 Level~4 problems sampled from the Hendrycks MATH test set
and 30 sampled instances from Commonsense170k.
All settings use $n=7$ agents (4 strong honest, 2 weak honest (faulty Byzantine),
1 adversarial Byzantine, $F=3$, $r=4$), with the adversarial Byzantine agent reporting
a falsified self-confidence of 1.0. On MATH we use \texttt{gpt-4o-mini} as the strong model and
\texttt{gpt-3.5-turbo} as the weak model; on Commonsense170k we use \texttt{gpt-5} as the strong model
and \texttt{gpt-4o} as the weak model.}
\label{tab:math500_main_sample}
\end{table*}

\begin{table*}[t]
\centering
\small
\begin{adjustbox}{width=1\textwidth}
\begin{tabular}{llccc cccc}
\toprule
&& \multicolumn{3}{c}{\textbf{MATH}} && \multicolumn{3}{c}{\textbf{Commonsense}}\\
\cmidrule{3-5} \cmidrule{7-9}

\textbf{Method} & \textbf{Round }&
\textbf{MERG (W/S)} & \textbf{Complete (W/S)} & \textbf{Erd\H{o}s-R\'enyi (W/S)} && \textbf{MERG (W/S)} & \textbf{Complete (W/S)} & \textbf{Erd\H{o}s-R\'enyi (W/S)} \\
\midrule
\multirow{2}{*}{CP-WBFT}
 & Init  & 25.0 / 79.0 & 23.0 / 80.8 & 25.0 / 79.5 && 76.7 / 83.3 & 75.0 / 85.0 & 75.0 / 83.3\\
 & Rnd 1-6 & 57.5 / 47.5 & 49.0 / 49.5 & 41.5 / 40.8 && 51.7 / 39.2  & 13.3 / 13.3 & 25.0 / 23.3\\
\midrule
\multirow{7}{*}{SAC (Ours)}
 & Init  & 21.5 / 77.0 & 26.0 / 81.0 & 25.0 / 80.8 && 73.3 / 83.3  & 75.0 / 83.3 & 73.3 / 84.2\\
 & Rnd 1 & 21.5 / 77.0 & 33.5 / 77.8 & 33.0 / 75.8 && 83.3 / 67.5  & 83.3 / 66.7 & 83.3 / 62.5\\
 & Rnd 2 & 27.5 / 80.2 & 34.5 / 80.5 & 34.5 / 77.0 && 73.3 / 79.2  & 73.3 / 80.0 & 70.0 / 77.5\\
 & Rnd 3 & 29.0 / 78.8 & 40.0 / 75.5 & 33.5 / 77.0 && 81.7 / 74.2  & 83.3 / 73.3 & 81.7 / 70.0\\
 & Rnd 4 & 30.5 / 80.8 & 37.0 / 79.2 & 36.0 / 78.2 && 80.0 / 78.3 & 80.0 / 76.7 & 76.7 / 75.0\\
 & Rnd 5 & 27.5 / 80.8 & 37.0 / 80.8 & 39.5 / 75.8 && 80.0 / 71.7  & 76.7 / 70.0 & 78.3 / 67.5\\
 & Rnd 6 & 25.5 / 80.2 & 34.0 / 77.8 & 38.5 / 77.0 && 76.7 / 78.3  & 80.0 / 83.3 & 75.0 / 78.3\\
\bottomrule
\end{tabular}
\end{adjustbox}
\caption{Per-round weak / strong honest accuracy (W/S, in \%) across three
communication topologies on the Hendrycks MATH test set (Level~4, 100 sampled
problems) and Commonsense170k (30 sampled instances). CP-WBFT reports its aggregated result over Rounds 1--6, whereas SAC (Ours)
reports the result at each individual communication round.}
\label{tab:math500_sample_rounds}
\end{table*}

Together, these properties explain why $(F+1)$-robustness is a natural design choice: it provides \emph{local resilience} against Byzantine agents through filtering and \emph{global information-propagation guarantees} across the entire network.

\section{Experimental Results}
\label{sec:experiments}

\begin{table*}[t]
\centering
\small
\begin{adjustbox}{width=1\textwidth}
\begin{tabular}{lllrrrrrrrr}
\toprule
\textbf{Dataset} & \textbf{Method} & \textbf{Topology} & \textbf{IAA} & \textbf{FAA} & \textbf{BFTI} & \textbf{RA} &
\textbf{W IAA}$\to$\textbf{FAA} & \textbf{S IAA}$\to$\textbf{FAA} & \textbf{H-Majority} \\
\midrule
\multirow{6}{*}{MATH-500} & \multirow{3}{*}{CP-WBFT}
 & MERG & 56.5\% & 53.7\% & $-2.8\%$ & 50.0\% & 43$\to$60\% & 69$\to$55\% & 50.0\% \\
& & Complete & 56.6\% & 45.5\% & $-11.1\%$ & 47.2\% & 44$\to$47\% & 68$\to$47\% & 47.2\% \\
& & Erd\H{o}s-R\'enyi & 57.2\% & 47.9\% & $-9.3\%$ & 49.6\% & 45$\to$50\% & 69$\to$50\% & 49.6\% \\
\cmidrule{2-10}
& \multirow{3}{*}{SAC (Ours)}
 & MERG & {56.1\%} & {62.6\%} & {$+6.5\%$} & {72.4\%} & 43$\to$60\% & 68$\to$71\% & {72.6\%} \\
& & Complete & {55.7\%} & {62.1\%} & {$+6.4\%$} & {74.4\%} & 42$\to$55\% & 68$\to$72\% & {74.6\%} \\
& & Erd\H{o}s-R\'enyi & {56.5\%} & {61.9\%} & {$+5.4\%$} & {72.6\%} & 44$\to$58\% & 68$\to$71\% & {73.2\%} \\
\midrule
\multirow{6}{*}{Commonsense} & \multirow{3}{*}{CP-WBFT}
 & MERG & 46.6\% & 39.6\% & $-7.0\%$ & 38.8\% & 50$\to$46\% & 52$\to$42\% & 38.8\% \\
& & Complete & 46.7\% & 34.7\% & $-12.0\%$ & 38.2\% & 50$\to$37\% & 53$\to$38\% & 38.2\% \\
& & Erd\H{o}s-R\'enyi & 46.6\% & 34.8\% & $-11.8\%$ & 38.3\% & 50$\to$38\% & 52$\to$38\% & 38.3\% \\
\cmidrule{2-10}
& \multirow{3}{*}{SAC (Ours)}
 & MERG & {46.8\%} & {47.9\%} & {$+1.2\%$} & {55.4\%} & 50$\to$49\% & 53$\to$55\% & {55.4\%} \\
& & Complete & {46.9\%} & {48.9\%} & {$+2.0\%$} & {56.2\%} & 50$\to$50\% & 53$\to$56\% & {56.3\%} \\
& & Erd\H{o}s-R\'enyi & {46.6\%} & {47.8\%} & {$+1.1\%$} & {55.3\%} & 50$\to$49\% & 52$\to$55\% & {55.5\%} \\
\bottomrule
\end{tabular}
\end{adjustbox}
\caption{Comparison of CP-WBFT and SAC across three communication topologies on
the full MATH-500 benchmark and five commonsense reasoning benchmarks
(ARC-Challenge, HellaSwag, BoolQ, OpenBookQA, and RTE).
All results are averaged over the full evaluation sets. All settings use $n=7$
agents: 4 strong honest agents, 2 weak honest agents, and 1 adversarial
Byzantine agent. Strong/weak agents use \texttt{Qwen3-4B}/
\texttt{Qwen2.5-1.5B-Instruct}.}
\label{tab:math500_main}
\end{table*}

\begin{table*}[t]
\centering
\small
\begin{adjustbox}{width=1\textwidth}
\begin{tabular}{llccc cccc}
\toprule
&& \multicolumn{3}{c}{\textbf{MATH-500}} &&
\multicolumn{3}{c}{\textbf{Commonsense}}\\
\cmidrule{3-5} \cmidrule{7-9}
\textbf{Method} & \textbf{Round} &
\textbf{MERG (W/S)} &
\textbf{Complete (W/S)} &
\textbf{Erd\H{o}s-R\'enyi (W/S)} &&
\textbf{MERG (W/S)} &
\textbf{Complete (W/S)} &
\textbf{Erd\H{o}s-R\'enyi (W/S)} \\
\midrule
\multirow{2}{*}{CP-WBFT}
 & Init
 & 42.7 / 68.8 & 43.9 / 68.3 & 44.6 / 69.0
 && 49.7 / 52.3 & 50.0 / 52.6 & 50.2 / 52.2 \\
 & Rnds 1--6
 & 60.4 / 55.0 & 47.2 / 47.4 & 49.8 / 50.2
 && 45.6 / 42.2 & 36.5 / 38.2 & 37.5 / 37.9 \\
\midrule
\multirow{7}{*}{SAC (Ours)}
 & Init
 & 42.9 / 67.9 & 41.5 / 68.0 & 44.2 / 68.0
 && 50.0 / 52.6 & 50.2 / 52.7 & 49.7 / 52.5 \\
 & Rnd 1
 & 42.9 / 67.9 & 41.5 / 68.0 & 44.2 / 68.0
 && 48.0 / 53.1 & 49.0 / 54.4 & 47.8 / 53.2 \\
 & Rnd 2
 & 55.8 / 69.2 & 50.0 / 71.5 & 53.4 / 70.7
 && 48.8 / 55.6 & 49.5 / 55.9 & 48.8 / 54.9 \\
 & Rnd 3
 & 59.2 / 69.5 & 53.0 / 71.1 & 55.8 / 71.6
 && 48.1 / 54.0 & 49.1 / 54.4 & 48.0 / 53.5 \\
 & Rnd 4
 & 59.5 / 69.9 & 53.0 / 71.5 & 57.0 / 71.2
 && 48.7 / 55.5 & 49.6 / 56.1 & 48.8 / 55.0 \\
 & Rnd 5
 & 59.1 / 70.2 & 53.5 / 71.3 & 56.6 / 71.3
 && 48.2 / 54.0 & 49.1 / 54.2 & 48.0 / 53.5 \\
 & Rnd 6
 & 60.0 / 70.6 & 53.8 / 71.1 & 56.7 / 71.2
 && 48.6 / 55.3 & 49.6 / 56.4 & 48.6 / 55.0 \\
\bottomrule
\end{tabular}
\end{adjustbox}
\caption{Per-round weak / strong honest accuracy (W/S, in \%) across three
communication topologies on the full MATH-500 benchmark and five commonsense
reasoning benchmarks (ARC-Challenge, HellaSwag, BoolQ, OpenBookQA, and RTE).}
\label{tab:per_round_math500_rounds}
\end{table*}

\subsection{Evaluation Metrics}

Let $y^\star$ denote the ground-truth answer to query $x \sim \mathcal{D}$. We partition honest agents $\Hcal$ into strong and weak groups, where strong and weak agents are treated as reliable and faulty (non-adversarial Byzantine) agents, respectively. Following~\citet{zheng2026rethinking}, we report the following metrics (higher is better). We set communication round to be $T=6$ for all experiments.

\noindent\textbf{Initial / Final Agent Accuracy (IAA / FAA).}
Average per-agent accuracy before communication ($t=0$) and after $T$ consensus rounds ($t=T$).

\noindent\textbf{Byzantine Fault Tolerance Improvement (BFTI).}
$\mathrm{BFTI} = \mathrm{FAA} - \mathrm{IAA}$ measures the gain or degradation introduced by the consensus protocol.

\noindent\textbf{Round-level Accuracy (RA).}
The fraction of queries for which the majority vote across all agents in $\Vcal$ at round $T$ matches $y^\star$.

\noindent\textbf{Group-wise Accuracy (W\_IAA->FAA, S\_IAA->FAA).}
Accuracy change of the weak and strong honest groups before and after consensus.

\noindent\textbf{Honest Majority (H-Majority).}
The fraction of queries for which the majority vote among honest agents $\Hcal$ at round $T$ matches $y^\star$.

\begin{table*}[t]
\centering
\small
\begin{adjustbox}{width=1\textwidth}
\begin{tabular}{lllrrrrrrrr}
\toprule
\textbf{Dataset} & \textbf{Method} & \textbf{Topology} & \textbf{IAA} & \textbf{FAA} & \textbf{BFTI} & \textbf{RA} &
\textbf{W IAA}$\to$\textbf{FAA} & \textbf{S IAA}$\to$\textbf{FAA} & \textbf{H-Majority} \\
\midrule
\multirow{6}{*}{AIME 2025}
& \multirow{3}{*}{CP-WBFT}
& MERG
& 60.5\% & 21.0\% & $-39.5\%$ & 0.0\%
& 63$\to$37\% & 74$\to$18\% & 0.0\% \\
& & Complete
& 61.9\% & 0.0\% & $-61.9\%$ & 0.0\%
& 62$\to$0\% & 78$\to$0\% & 0.0\% \\
& & Erd\H{o}s-R\'enyi
& 61.0\% & 0.0\% & $-61.0\%$ & 0.0\%
& 62$\to$0\% & 76$\to$0\% & 0.0\% \\
\cmidrule{2-10}
& \multirow{3}{*}{SAC (Ours)}
& MERG
& {62.4\%} & {61.4\%} & {$-1.0\%$} & {80.0\%}
& 63$\to$78\% & 78$\to$68\% & {83.3\%} \\
& & Complete
& {61.4\%} & {58.6\%} & {$-2.8\%$} & {76.7\%}
& 67$\to$72\% & 74$\to$67\% & {76.7\%} \\
& & Erd\H{o}s-R\'enyi
& {63.3\%} & {61.4\%} & {$-1.9\%$} & {83.3\%}
& 72$\to$75\% & 75$\to$70\% & {80.0\%} \\
\bottomrule
\end{tabular}
\end{adjustbox}
\caption{Comparison of CP-WBFT and SAC across three communication topologies on
the full AIME 2025 benchmark. All settings use $n=7$ agents: 4 strong
honest agents, 2 weak honest agents, and 1 adversarial Byzantine agent,
with $T=6$ communication rounds. Strong agents use
\texttt{Qwen3-4B-Thinking-2507} with a 32K-token generation budget, while weak
agents use \texttt{Qwen3-4B}.}
\label{tab:aime2025_main}
\end{table*}

\subsection{Experimental Setups}




\noindent\textbf{Datasets.}
While \citet{zheng2026rethinking} evaluates their method on $10$ hand-curated GSM8K problems, we provide a broader evaluation with more challenging mathematical and commonsense reasoning benchmarks across multiple model settings. For closed-weight LLM experiments, repeated multi-agent API calls incur substantial cost; we thus evaluate on sampled subsets consisting of $100$ Level~4 problems from the Hendrycks MATH test set~\citep{MATHall} and $30$ instances from Commonsense170k~\citep{hu2023llm}. For open-weight LLM experiments, we evaluate on the full MATH-500 benchmark~\citep{MATHall}, which is distinct from the sampled Hendrycks MATH Level~4 setting used for closed-weight models. We further evaluate on the full evaluation sets of five commonsense reasoning benchmarks: ARC-C~\citep{clark2018arc}, HellaSwag~\citep{zellers2019hellaswag}, BoolQ~\citep{clark2019boolq}, OBQA~\citep{mihaylov2018openbookqa}, and RTE~\citep{rte}. Finally, for reasoning LLM experiments, we evaluate on the full AIME 2025 and HMMT-25 benchmarks. The sampled problem lists for the closed-weight experiments and additional dataset details are provided in Appendix~\ref{app:dataset}. 


\noindent\textbf{Network.}
Each network contains $n=7$ agents: $4$ strong honest, $2$ weak honest (faulty Byzantine), and $1$ adversarial Byzantine. We set $F=3$ and evaluate three $(F{+}1)$-robust topologies: $\gamma$-MERG~\citep{lee2026minimal}, complete~\citep{leblanc2013resilient}, and Erd\H{o}s--R\'enyi~\citep{erdHos1961strength,zhang2015notion}. More details are provided in Appendix~\ref{sec:appendix_graph}.


\noindent\textbf{Models.}
We evaluate SAC across diverse model families and capability levels.
The weak/strong model pairs are
\texttt{gpt-3.5-turbo} and \texttt{gpt-4o-mini} for Hendrycks MATH,
\texttt{gpt-4o} and \texttt{gpt-5} for closed-weight commonsense reasoning,
\texttt{Qwen2.5-1.5B-Instruct} and \texttt{Qwen3-4B} for open-weight experiments,
and \texttt{Qwen3-4B} and \texttt{Qwen3-4B-Thinking-2507} for AIME 2025.
For AIME 2025, the strong model uses a 32K-token generation budget.
We additionally evaluate \texttt{GPT-5.6-sol} as the strong model on HMMT-25.

\noindent\textbf{Byzantine model.}
Adversarial Byzantine agents return out-of-distribution weak-model responses with falsified self-confidence scores of $1.0$, simulating dishonest-confidence attacks against CP-WBFT while remaining indistinguishable from reliable agents. Faulty Byzantine agents follow the same protocol as reliable agents.

\begin{table*}[t]
\centering
\small
\begin{adjustbox}{width=1\textwidth}
\begin{tabular}{lllrrrrrrr}
\toprule
\textbf{Dataset} & \textbf{Method} & \textbf{Topology} &
\textbf{Strong Agent ACC} & \textbf{IAA} & \textbf{FAA} &
\textbf{BFTI} & \textbf{RA} &
\textbf{Strong I}$\to$\textbf{F} & \textbf{H-Majority} \\
\midrule
\multirow{4}{*}{HMMT-25}
& \multirow{2}{*}{CP-WBFT}
& MERG
& 100.0\% & 57.1\% & 5.7\% & $-51.4\%$ & 10.0\%
& 100$\to$10\% & 10.0\% \\
& & Complete
& 100.0\% & 57.1\% & 5.7\% & $-51.4\%$ & 10.0\%
& 100$\to$10\% & 10.0\% \\
\cmidrule{2-10}
& \multirow{2}{*}{SAC (Ours)}
& MERG
& {100.0\%} & {57.1\%} & {53.3\%}
& {$-3.8\%$} & {93.3\%}
& {100$\to$93\%} & {93.3\%} \\
& & Complete
& {100.0\%} & {57.1\%} & {53.3\%}
& {$-3.8\%$} & {93.3\%}
& {100$\to$93\%} & {93.3\%} \\
\bottomrule
\end{tabular}
\end{adjustbox}
\caption{
Comparison of CP-WBFT and SAC across two communication topologies on the full HMMT-25 benchmark. All settings use $n=7$ agents: four \texttt{GPT-5.6-sol} strong honest agents and three adversarial Byzantine agents, with no weak honest agents. 
}
\label{tab:gpt5}
\end{table*}

\begin{table*}[t]
\centering
\scriptsize
\begin{adjustbox}{width=1\textwidth}
\begin{tabular}{@{}ll *{5}{ccc}@{}}
\toprule
 & & \multicolumn{3}{c}{\textbf{IAA (\%)}} & \multicolumn{3}{c}{\textbf{FAA (\%)}} & \multicolumn{3}{c}{\textbf{BFTI (\%)}} & \multicolumn{3}{c}{\textbf{RA (\%)}} & \multicolumn{3}{c}{\textbf{H-Majority (\%)}} \\
\cmidrule(lr){3-5}\cmidrule(lr){6-8}\cmidrule(lr){9-11}\cmidrule(lr){12-14}\cmidrule(lr){15-17}
&
 & b=1 & b=2 & b=3
 & b=1 & b=2 & b=3
 & b=1 & b=2 & b=3
 & b=1 & b=2 & b=3
 & b=1 & b=2 & b=3 \\
\textbf{Method} & \textbf{Topology} & w=2 & w=1 & w=0
   & w=2 & w=1 & w=0
   & w=2 & w=1 & w=0
   & w=2 & w=1 & w=0
   & w=2 & w=1 & w=0 \\
\midrule
\multirow{3}{*}{CP-WBFT}
 & MERG          & 55.2 & 48.1 & 44.3 & 44.8 & 31.4 & 29.0 & $-10.4$ & $-16.7$ & $-15.3$ & 30.0 & 40.0 & 50.0 & 40.0 & 43.3 & 50.0 \\
 & Complete & 56.2 & 51.9 & 49.0 & 31.4 & 30.5 & 34.3 & $-24.8$ & $-21.4$ & $-14.7$ & 33.3 & 43.3 & 60.0 & 36.7 & 43.3 & 60.0 \\
 & Erd\H{o}s-R\'enyi & 56.2 & 52.9 & 48.1 & 40.0 & 31.4 & 29.5 & $-16.2$ & $-21.5$ & $-18.6$ & 43.3 & 40.0 & 50.0 & 46.7 & 40.0 & 53.3 \\
\midrule
\multirow{3}{*}{SAC (Ours)}
 & MERG          & 54.8 & 49.5 & 47.1 & 62.4 & 52.9 & 48.1 & $+7.6$  & $+3.4$  & $+1.0$  & 90.0 & 83.3 & 86.7 & 96.7 & 83.3 & 83.3 \\
 & Complete & 55.7 & 51.0 & 47.6 & 59.5 & 54.8 & 52.4 & $+3.8$  & $+3.8$  & $+4.8$  & 83.3 & 86.7 & 93.3 & 86.7 & 83.3 & 93.3 \\
 & Erd\H{o}s-R\'enyi & 54.3 & 51.4 & 46.7 & 60.5 & 57.1 & 47.1 & $+6.2$  & $+5.7$  & $+0.4$  & 80.0 & 93.3 & 83.3 & 86.7 & 90.0 & 86.7 \\
\bottomrule
\end{tabular}
\end{adjustbox}
\caption{Ablation over byzantine-to-weak composition at $n{=}7$ on the Hendrycks MATH test set (30 sampled Level~4 problems). The byzantine agent count ($b$) and weak honest count ($w$) are varied while the strong honest count is fixed at $4$ ($F{=}3$, $T{=}6$, $r{=}4$). Strong/weak agents use \texttt{gpt-4o-mini}/\texttt{gpt-3.5-turbo}.}
\label{tab:bw-ablation}
\end{table*}

\begin{table*}[t]
\centering
\footnotesize
\begin{tabular}{@{}llcccccc@{}}
\toprule
\textbf{Method} & \textbf{Topology} & \textbf{$r$ }& \textbf{IAA (\%) }& \textbf{FAA (\%)} & \textbf{BFTI (\%)} & \textbf{RA (\%)} & \textbf{H-Majority (\%)} \\
\midrule
\multirow{4}{*}{CP-WBFT}
 & MERG          & 5 & 63.7 & 37.0 & $-26.7$ & 20.0 & 23.3 \\
 & Complete & 5 & 61.9 & 29.6 & $-32.3$ & 33.3 & 33.3 \\
 & Preferential  & 4 & 64.8 & 44.1 & $-20.7$ & 26.7 & 26.7 \\
 & Erd\H{o}s-R\'enyi & 4 & 63.0 & 50.4 & $-12.6$ & 36.7 & 43.3 \\
\midrule
\multirow{4}{*}{SAC (Ours)}
 & MERG          & 5 & 62.6 & 64.8 & $+2.2$  & 90.0 & 90.0 \\
 & Complete & 5 & 60.4 & 62.2 & $+1.8$  & 86.7 & 86.7 \\
 & Preferential  & 4 & 62.6 & 65.6 & $+3.0$  & 90.0 & 90.0 \\
 & Erd\H{o}s-R\'enyi & 4 & 62.6 & 60.4 & $-2.2$  & 86.7 & 86.7 \\
\bottomrule
\end{tabular}
\caption{Scalability to larger networks ($n{=}9$ agents) on the Hendrycks MATH test set (Level~4, 30 sampled problems). Configuration: 6 strong honest, 2 weak honest, and 1 adversarial Byzantine ($F{=}3$, $T{=}6$). Strong/weak agents use \texttt{gpt-4o-mini}/\texttt{gpt-3.5-turbo}.}
\label{tab:n9-main}
\end{table*}

\subsection{Experimental Results}

\subsubsection{Closed-Weight LLMs}
We first test our method with closed-weight LLMs on 100 Level~4 problems sampled from the Hendrycks MATH test set
and 30 sampled instances from Commonsense170k, and the results are shown in Tables~\ref{tab:math500_main_sample} and~\ref{tab:math500_sample_rounds}.

\noindent\textbf{MATH.} We use \texttt{gpt-3.5-turbo} and \texttt{gpt-4o-mini} as the weak and strong models, respectively.
CP-WBFT consistently yields negative BFTI across all topologies ($-8.7\%$ to $-17.5\%$), with strong-agent accuracy decreasing after the first communication round while weak-agent accuracy increases. In contrast, \ac{sac} achieves non-negative BFTI ($+0.4\%$ to $+3.0\%$), preserves strong-agent capability, and maintains stable per-round behavior. H-Majority reaches $78$--$79\%$ across all topologies.

\noindent\textbf{Commonsense reasoning.}
A similar trend can be observed on Commonsense170k, where we use \texttt{gpt-4o} and \texttt{gpt-5} as the weak and strong models. CP-WBFT suffers severe degradation under Byzantine influence ($-54.7\%$ to $-32.4\%$ BFTI), with both weak and strong agents collapsing immediately after Rnd~1. In contrast, \ac{sac} maintains near-zero degradation ($-2.8\%$ to $+0.0\%$ BFTI), preserving the performance of strong and weak agents. 


\subsubsection{Open-Weight LLMs}
\label{openweight}
We now evaluate our method with open-weight LLMs, using \texttt{Qwen2.5-1.5B-Instruct} and \texttt{Qwen3-4B} as the weak and strong models, respectively. The results in Tables~\ref{tab:math500_main} and~\ref{tab:per_round_math500_rounds} show a similar trend as that of the closed-weight LLMs. CP-WBFT consistently yields negative BFTI across all topologies ($-2.8\%$ to $-12.0\%$), with both weak and strong groups suffering from accuracy loss after the first communication round. In contrast, SAC consistently achieves positive BFTI ($+1.1\%$ to $+6.5\%$) with stable or improving per-round behavior. SAC achieves H-Majority of $55.4$-$74.6\%$, compared to $38.2$-$50.0\%$ for CP-WBFT. Across all topologies, SAC preserves strong-agent capability while preventing collapse of weaker agents.

\subsection{Reasoning LLMs}
We further evaluate the methods with reasoning LLMs on the AIME 2025 benchmark. We use \texttt{Qwen3-4B-Thinking-2507} with a 32K-token generation budget and \texttt{Qwen3-4B} as the strong and weak agents. As shown in Table~\ref{tab:aime2025_main}, \ac{sac} greatly outperforms CP-WBFT with substantially higher BFTI and H-Majority across all topologies.

Similar performance trend can be seen even with the state-of-the-art model \texttt{GPT-5.6-sol} on the HMMT-25 benchmark. For this experiment, we have four strong honest agents and three adversarial agents. Table~\ref{tab:gpt5} shows that \ac{sac} better preserves the performance of honest agents: it achieves a BFTI of only $-3.8\%$, compared with $-51.4\%$ for CP-WBFT, while achieving $93.3\%$ H-Majority compared with only $10.0\%$ for CP-WBFT.

\subsection{Ablation Study}

We conduct ablation studies using 30 sampled Level 4 problems from the Hendrycks MATH test set to examine the effects of Byzantine-agent composition and network size $n$ on \ac{sac}.

\noindent\textbf{Varying byzantine-to-weak composition.}
Table~\ref{tab:bw-ablation} fixes the strong honest count at $4$ and varies $(b,w)\in\{(1,2),(2,1),(3,0)\}$ at $n{=}7$. CP-WBFT yields negative BFTI across all settings, whereas SAC maintains non-negative BFTI in nearly all cases with high H-Majority ($83.3$-$96.7\%$). Performance does not degrade monotonically as $b$ increases, consistent with $(F{+}1)$-robustness limiting adversarial influence under the $F$-local assumption.

\noindent\textbf{Scaling to larger networks.}
Table~\ref{tab:n9-main} scales the system to $n{=}9$ with a $6/2/1$ split at the same adversary bound $F{=}3$. While CP-WBFT still degrades substantially, SAC maintains positive BFTI on three of four topologies and H-Majority of $86.7$--$90.0\%$, including on a $4$-robust preferential-attachment graph. These results suggest that our method scales to larger networks without retuning.

\section{Discussion}

\noindent\textbf{Self-reported confidence creates a point of vulnerability.}
Across all evaluated topologies and benchmarks, CP-WBFT with adversarial agents with falsified self-confidence consistently exhibits rapid and significant performance degradation. This suggests that manipulable trust signals themselves become a critical vulnerability.

\noindent\textbf{Receiver-side evaluation contains Byzantine influence.}
By construction, \ac{sac} contains the impact of Byzantine agents through receiver-side evaluation. This allows \ac{sac} to achieve non-negative BFTI in nearly all settings while achieving substantially higher H-Majority than CP-WBFT.

\noindent\textbf{Strong agents are preserved while weak agents are lifted.}
CP-WBFT may improve weak-agent accuracy, but by dragging strong agents toward incorrect responses, degrading overall reliability. In contrast, SAC preserves strong-agent capability with only minor degradation while improving weak agents across all topologies in general. These results suggest that SAC acts as a \emph{protective filtering mechanism} against Byzantine faults.

\section{Conclusion}
We studied robust decision-making in decentralized LLM-MAS under Byzantine faults. We proposed \acf{sac}, a fully decentralized filter-and-refine protocol with receiver-side evaluation, and established $(F{+}1)$-robustness conditions under which honest agents reliably refine their responses despite Byzantine influence. Experiments on mathematical and commonsense reasoning benchmarks show that \ac{sac} consistently maintains robustness across diverse topologies, whereas prior methods suffer a collapse in performance under Byzantine attacks.


\section*{Limitations}

One main limitation of \ac{sac} is that its practical robustness relies on the agents' base capabilities, as effective filtering and refinement require both (i) a reliable scoring function $\phi_i$ for identifying correct responses and (ii) the ability to effectively integrate neighbor information. Thus, a key failure mode could arise when a faulty agent is overconfident and assigns low scores to correct neighbor responses, thereby discarding useful information and remaining trapped in error despite reliable neighbors. A complete characterization of how model capabilities affect performance remains future work.

In addition, our evaluation is limited to networks of $n \in \{7, 9\}$, a single adversarial behavior (cached weak-model replay with falsified confidence), and single-turn responses. Larger and dynamic networks, multi-turn reasoning, and tool use are outside the present scope.

Lastly, we note that stronger robustness can improve resilience but this requires denser network, thereby increasing computational and communication costs, as discussed in Remark~\ref{remark:computational_complexity}. This highlights a resilience-efficiency trade-off, which we plan to investigate in future work.

\section*{Ethical considerations}
This work studies robustness in decentralized multi-agent LLM systems under Byzantine faults. Our experiments are conducted using publicly available benchmark datasets and commercial or open-weight language models. No human subjects, personal data, or sensitive information are involved.

Although our work considers adversarial agents that provide misleading responses or falsified confidence scores, these settings are studied solely for robustness evaluation and reliability analysis. We hope this work contributes to the development of more reliable and trustworthy collaborative LLM systems.

\bibliography{custom}

\appendix

\clearpage
\newpage
\section*{Appendix}
\section{Graph Topologies}
\label{sec:appendix_graph}
We evaluate our method on four graph topologies which we describe below:

\begin{itemize}
\item $\gamma$-MERGs~\cite{lee2026minimal}: These graphs achieve maximum robustness, i.e., $r=\lceil n/2 \rceil$, for a given number of nodes $n$, while using the minimal set of edges possible. We distinguish two cases for constructing a $\gamma$-MERG:

\begin{itemize}
    \item \textbf{If $n$ is odd:} Let $\mathcal{X} \subset \mathcal{V}$ be a set of $\gamma + 1$ nodes forming a complete subgraph. Connect each of the remaining $\gamma - 1$ nodes to any $\gamma$ nodes in $\mathcal{X}$.

    \item \textbf{If $n$ is even:} Let $\mathcal{X} \subset \mathcal{V}$ be a set of $\gamma$ nodes, each adjacent to all other nodes in $\mathcal{V}$. Then select $\left\lceil \frac{\gamma - 2}{2} \right\rceil$ disjoint pairs of nodes in $\mathcal{X}$ and remove the edge between each pair.
\end{itemize}

By Theorem~1 of~\cite{lee2026minimal}, the resulting graph is $\lceil n/2 \rceil$-robust.

    \item Complete graphs: In this case, every agent is connected to every other agent. By Lemma~4 of~\citet{leblanc2013resilient}, a complete graph with $n$ nodes is $r=\lceil n/2 \rceil$-robust.

    \item Preferential-attachment graphs: To construct $r$-robust graphs, we first form a complete graph with $2r-1$ nodes. By Lemma~4 of~\citet{leblanc2013resilient}, this subgraph is $r$-robust. The remaining $n-2r+1$ nodes are then each connected to any $r$ nodes in the original set of nodes in the complete graph. By Theorem~5 of~\citet{leblanc2013resilient}, the resulting graph is then $r$-robust.

    \item Erd\H{o}s--R\'enyi random graphs~\cite{erdHos1961strength}: These graphs are parameterized by the number of nodes $n$ and an edge probability $p$. The presence of each edge is independent of every other edge, and each edge is included independently with probability $p$. As shown in Theorem~3 of~\citet{zhang2015notion}, choosing 
    $p=\frac{\ln n + (r-1)\ln \ln n}{n}$
    ensures that the graph is $r$-robust with high probability as $n\to\infty$. We sample graphs using this probability and verify $r$-robustness prior to simulation using the method from~\citet{usevitch2020determining}.
\end{itemize}

\section{Prompt Templates}
\label{app:prompts}

We instantiate the receiver-side scoring function $\phi_i$ (Eq.~\eqref{eq:eval}) and the refine operator (Eq.~\eqref{eq:refine}) using the prompt templates below. We provide the prompts used for the sampled Hendrycks MATH (Level~4) experiments. For Commonsense170k, we use the same overall structure with minor modifications to the task instruction and answer format, as described in~\Cref{app:prompts:commonsense}.

\subsection{Scoring Prompt ($\phi_i$)}
\label{app:prompts:score}

The scoring prompt instructs $\mathcal{M}_i$ to independently solve the query $x$, compare its own reasoning against $r_j^{(t)}$, and finally output a calibrated confidence score in $[0,1]$. This procedure makes $s_{i\to j}^{(t)}$ a receiver-side evaluation: the score depends solely on $\mathcal{M}_i$'s internal reasoning applied to the content of $r_j^{(t)}$, without relying on any confidence reported by agent $j$.

\begin{tcolorbox}[colback=gray!5,colframe=black!50,boxrule=0.4pt,fontupper=\small\ttfamily,breakable]
\textbf{System:} You are an expert evaluator. Respond using only a single number.\\[4pt]
\textbf{User:}\\
You are evaluating the correctness of a proposed answer to a mathematical reasoning problem.\\[2pt]
Assume that up to $f$ responses may be malicious or unreliable.\\[2pt]
Reason carefully before producing the final score.\\[4pt]
Problem: \{$x$\}\\
Proposed answer: \{$r_j^{(t)}$\}\\[4pt]
Step 1: Solve the problem independently step by step.\\
Step 2: Evaluate the proposed answer.\\
Step 3: Output ONLY a single decimal number between 0.0 and 1.0.\\
0.0 = clearly incorrect, 0.5 = uncertain, 1.0 = clearly correct.\\
Score:
\end{tcolorbox}

\noindent The first numeric token in $\mathcal{M}_i$'s response is extracted using the regular expression \texttt{([0-9]+(?:\textbackslash.[0-9]+)?)} and clipped to the interval $[0,1]$. If parsing fails or the LLM returns an invalid response, the score defaults to $s_{i\to j}^{(t)}=0.5$, treating the response as maximally uncertain. The self-score $s_{i\to i}^{(t)}=\phi_i(x, r_i^{(t)})$ is computed using the same prompt with $r_j^{(t)}$ replaced by $r_i^{(t)}$.

\subsection{Refine Prompt (Eq.~\eqref{eq:refine})}
\label{app:prompts:refine}

The refine prompt provides $\mathcal{M}_i$ with its current response $r_i^{(t)}$ together with the retained neighbor responses $\{(r_j^{(t)}, s_{i\to j}^{(t)})\}_{j\in\Rcal_i^{(t)}}$, sorted in descending order of reliability score. The instruction
\emph{``Prefer answers with higher reliability scores, but retain your current answer if you believe it is correct''}
serves as the natural-language realization of Eq.~\eqref{eq:refine}, encouraging $\mathcal{M}_i$ to prioritize highly scored neighbor responses while preserving its own response as an anchor when appropriate.

\begin{tcolorbox}[colback=gray!5,colframe=black!50,boxrule=0.4pt,fontupper=\small\ttfamily,breakable]
\textbf{System:} You are a helpful assistant. Follow the required output format exactly.\\[4pt]
\textbf{User:}\\
You are solving a mathematical reasoning problem. Review the information below and provide your best answer.\\[2pt]
Assume that up to $f$ responses may be malicious or unreliable.\\[2pt]
Reason carefully before producing the final answer.\\[4pt]
Problem: \{$x$\}\\[2pt]
Your current answer: \{$r_i^{(t)}$\}\\[4pt]
Other agents' answers with reliability scores (higher indicates greater reliability):\\
\quad - Answer: \{$r_{j_1}^{(t)}$\}\quad(reliability score: \{$s_{i\to j_1}^{(t)}$\})\\
\quad - Answer: \{$r_{j_2}^{(t)}$\}\quad(reliability score: \{$s_{i\to j_2}^{(t)}$\})\\
\quad \dots\quad(sorted in descending order of score, $j_k\in\Rcal_i^{(t)}$)\\[4pt]
Prefer answers with higher reliability scores, but retain your current answer if you believe it is correct.\\
Respond using the format:\\
Answer: [your numerical answer]
\end{tcolorbox}

\noindent Reliability scores are formatted to two decimal places. The retained set $\Rcal_i^{(t)}$ is computed prior to rendering the prompt, so neighbors removed by $\mathrm{bottom}_F(\cdot)$ never appear in the refinement stage. If $\Rcal_i^{(t)}=\emptyset$, the refinement step is skipped and $r_i^{(t+1)} = r_i^{(t)}$. The final numerical answer is extracted using a dataset-specific parser; if extraction fails, the update again defaults to $r_i^{(t+1)} = r_i^{(t)}$.

\subsection{Commonsense170k Prompt Adaptation}
\label{app:prompts:commonsense}

Since the sampled Commonsense170k instances span heterogeneous formats, including true/false, multiple-choice, and sentence-completion tasks, we replace the math-specific wording with a question-agnostic formulation. In particular, ``mathematical reasoning problem'' is replaced with ``question,'' and the answer-format instruction becomes
\texttt{Respond strictly using the answer format specified in the question.\textbackslash n Answer: [your answer]}.
The receiver-side scoring procedure and descending-score neighbor presentation remain unchanged.

\subsection{Byzantine Agent Behavior}
\label{app:prompts:byzantine}

Adversarial Byzantine agents (Section~\ref{sec:experiments}) bypass both prompting procedures. At each round $t$, they (i) broadcast a cached weak-model response generated from an out-of-distribution query, and (ii) report a falsified self-confidence score of $1.0$ when interacting with confidence-based baselines such as CP-WBFT~\citep{zheng2026rethinking}. This simulates a strong dishonest-confidence attack while remaining indistinguishable from honest agents at the message level. In contrast, faulty Byzantine agents follow the same prompting pipeline as reliable agents, and produce unreliable outputs solely due to the limited capability of the underlying weak model.

\section{Experimental Setup}

\subsection{Dataset for Closed-source LLMs}
\label{app:dataset}

We provide the sampled problems used in the closed-source GPT API experiments reported in Tables~\ref{tab:math500_main_sample} and~\ref{tab:math500_sample_rounds}. For Hendrycks MATH, we show 30 representative examples from the 100 sampled Level~4 instances, while for Commonsense170k, we list all 30 sampled instances. Each row reports the dataset-specific question identifier (\texttt{qid}), an abbreviated version of the question, and the ground-truth answer $y^\star$. Full problem statements can be retrieved using the corresponding \texttt{qid} from the original public releases.

\subsubsection{Hendrycks MATH (Level~4)}
\label{app:dataset:math}

The 100 sampled instances span the seven question categories present in Hendrycks MATH Level~4. We group the 30 representative examples shown below by category for readability. These problems correspond to the Hendrycks MATH experiments in Tables~\ref{tab:math500_main_sample} and~\ref{tab:math500_sample_rounds}, where \texttt{gpt-4o-mini} and \texttt{gpt-3.5-turbo} are used as the strong and weak models, respectively.

\subsubsection{Commonsense170k}
\label{app:dataset:commonsense}

The 30 sampled Commonsense170k instances span four question formats: true/false, two-option pronoun resolution, multiple-choice reasoning, and sentence completion. We group the examples by format. For non-true/false questions, the answer choices are listed below the question statement, and the ground-truth label corresponds to the option identifier in the original dataset. These problems correspond to the Commonsense170k experiments in Tables~\ref{tab:math500_main_sample} and~\ref{tab:math500_sample_rounds}, where \texttt{gpt-5} and \texttt{gpt-4o} are used as the strong and weak models, respectively.

\begin{table*}[!t]
\centering
\footnotesize
\setlength{\tabcolsep}{5pt}
\renewcommand{\arraystretch}{1.25}
\begin{tabular}{@{}l p{0.62\linewidth} r@{}}
\toprule
\textbf{qid} & \textbf{Question} & $y^\star$ \\
\midrule
\multicolumn{3}{@{}l}{\textit{Algebra}} \\
\addlinespace[2pt]
\texttt{32448964} & Value of $K$ for which $6x+4y=7,\ Kx+8y=7$ has no solution & 12 \\
\texttt{57cbe0b7} & Solve $(\sqrt{12x}+12)(\sqrt{3x}-6) = 4(x+3)+x-34$ & 50 \\
\texttt{984a9fb8} & Express $g^4 + 12g^2 + 9 = c(g^2+p)^2 + q$; find $q$ & $-27$ \\
\texttt{b0969f24} & Piecewise $f$ with $f(-4) = -60/13$, $f(4) = 3120$; find $a+b$ & 28 \\
\texttt{d026abb8} & Integer $x$ in arithmetic sequence $3^2,\, x,\, 3^4$ & 45 \\
\texttt{9bb18dfb} & Minimum of $|x-1| + |x-1.5| + |x-2|$ over $x \in \mathbb{R}$ & 1 \\
\midrule
\multicolumn{3}{@{}l}{\textit{Intermediate Algebra}} \\
\addlinespace[2pt]
\texttt{40c817ff} & $\log_y x + \log_x y = 7$; find $(\log_y x)^2 + (\log_x y)^2$ & 47 \\
\texttt{b3f26b98} & Piecewise $f$ invertible; find $k$ & 1 \\
\texttt{b11209a5} & Geometric sequence with $a_5 - a_4 = 576$, $a_2 - a_1 = 9$; sum $\sum_{i=1}^{5} a_i$ & 1023 \\
\texttt{10aef5c7} & Count of $n < 1000$ such that $\lfloor \log_2 n \rfloor$ is a positive even integer & 340 \\
\midrule
\multicolumn{3}{@{}l}{\textit{Prealgebra}} \\
\addlinespace[2pt]
\texttt{2de720d4} & Number of times the digit 6 appears in integers from 1 to 100 & 20 \\
\texttt{b6a36467} & Smallest integer $> 2$ that leaves remainder 2 modulo 3, 4, 5, 6 & 62 \\
\texttt{34e64136} & Five tests scored 87, 85, 87; last two differ by 3 with average 90; find max & 97 \\
\texttt{f3b85d7a} & Number of even perfect cubes less than 2008 & 6 \\
\texttt{c740506c} & Box nesting: 4 large $\times$ 3 medium $\times$ 2 small; total number of boxes & 40 \\
\texttt{e067503f} & Unit conversion: 6 wallops $=$ 5 ballops, 3 ballops $=$ 11 fallops; wallops for 110 fallops & 36 \\
\texttt{7bfcd56a} & First odd year after 2006 whose digits split into 3-digit and 1-digit groups with common factor $>1$ & 2013 \\
\midrule
\multicolumn{3}{@{}l}{\textit{Number Theory}} \\
\addlinespace[2pt]
\texttt{a10973bf} & Smallest integer with exactly 16 divisors, including 12 and 15 & 120 \\
\texttt{27b01b01} & Base-7 cryptarithm $\overline{AB}_7 + \overline{BA}_7 = \overline{AA0}_7$; product $A \cdot B$ & 6 \\
\texttt{583c9eaf} & Number of even positive divisors of 252 & 12 \\
\texttt{37bab629} & Count of $n \in [1, 29]$ for which $n/30$ has a repeating decimal expansion & 20 \\
\midrule
\multicolumn{3}{@{}l}{\textit{Counting \& Probability}} \\
\addlinespace[2pt]
\texttt{0a3e457d} & 20-member club, 3 distinct officers, with constraint Alex serves only if Bob does not & 6732 \\
\texttt{d790474f} & Jar with 4 red, 2 white marbles; swap-and-sample procedure; $P(\text{red}) = 11/18$ & $11/18$ \\
\texttt{b98d41f7} & Coefficient of $x^2 y^2$ in $(x+y)^4 + (x+2y)^4$ & 30 \\
\texttt{03694fa9} & 120 triangles formed by $n$ vertices on a base; find $n$ & 16 \\
\texttt{33b5135e} & 10-member chess club, 900 games played; find $N$ games per pair & 20 \\
\texttt{e5787bf4} & Distinct bracelets with 5 distinct beads under rotation and reflection equivalence & 12 \\
\midrule
\multicolumn{3}{@{}l}{\textit{Geometry}} \\
\addlinespace[2pt]
\texttt{1efe044e} & Maximum volume (cm$^3$) when right $\triangle$ with legs $3,\,4$ rotated about a leg & 50 \\
\texttt{de4ec0fd} & Isosceles $\triangle$ with $AB=AC=14$, $BC=26$; shortest angle bisector & $8\sqrt{33}/3$ \\
\midrule
\multicolumn{3}{@{}l}{\textit{Precalculus}} \\
\addlinespace[2pt]
\texttt{16be6141} & $\mathbf{v} = \mathrm{proj}_{\mathbf{a}} \mathbf{v} + \mathrm{proj}_{\mathbf{b}} \mathbf{v}$ for all $\mathbf{v}$; find $\mathbf{a} \cdot \mathbf{b}$ & 0 \\
\bottomrule
\end{tabular}
\caption{Sampled Hendrycks MATH Level~4 problems used in our experiments, grouped by category. The table shows 30 representative examples from the 100 sampled instances used throughout the evaluation. These problems are additionally used in the ablation experiments reported in Tables~\ref{tab:bw-ablation} and~\ref{tab:n9-main}. The question column gives an abbreviated statement; full problem text is recoverable from the public Hendrycks MATH release via the \texttt{qid}.}
\label{tab:math500_problems}
\end{table*}

\begin{table*}[h]
\centering
\scriptsize
\renewcommand{\arraystretch}{1}
\begin{tabular}{@{}l p{0.8\linewidth} l@{}}
\toprule
\textbf{qid} & \textbf{Question (and choices, where applicable)} & $y^\star$ \\
\midrule
\multicolumn{3}{@{}l}{\textit{True/False}} \\
\addlinespace[2pt]
\texttt{50816f9e} & Is there a season 3 of \emph{Wrecked}? & true \\
\texttt{4ba7682f} & Does Las Vegas have a professional football team? & true \\
\texttt{17f50271} & Can a person be jailed for civil contempt of court? & true \\
\texttt{4d7cce65} & Has Maroon 5 ever performed at a Super Bowl? & true \\
\texttt{84e37998} & Is \emph{I Know Why the Caged Bird Sings} a memoir? & true \\
\texttt{4bb2e906} & Is \emph{Varsity Blues} based on a true story? & false \\
\texttt{433f24bc} & Does it count if you hit the backboard on a free throw? & true \\
\texttt{b4c59eee} & Can you remove the venom glands from a snake? & true \\
\texttt{0c7abae6} & Is the Statue of Liberty in New Jersey? & false \\
\texttt{a461f8a8} & Can you send a letter without a return address? & true \\
\texttt{f448271b} & Is it illegal to pass on a solid yellow line? & false \\
\texttt{28225993} & Can you score an own goal from a direct free kick? & true \\
\texttt{90b6886c} & Do red, yellow, and orange peppers taste different? & true \\
\texttt{a03e3147} & Is there a fifth season of \emph{Mom}? & true \\
\texttt{07250c0c} & Are \emph{The Five Heartbeats} based on a real group? & false \\
\midrule
\multicolumn{3}{@{}l}{\textit{Two-option pronoun resolution}} \\
\addlinespace[2pt]
\texttt{263ebb80} & Embroidery: when threading the needle, the \_ was too thick.\newline\emph{Choices:} (option1) needle; (option2) floss. & option2 \\
\texttt{c3a49074} & ``Felicia asked Katrina about new technology because \_ was interested.''\newline\emph{Choices:} (option1) Felicia; (option2) Katrina. & option1 \\
\texttt{60091082} & Movers wanted to store the boxes in the offices, but the \_ were too small.\newline\emph{Choices:} (option1) offices; (option2) boxes. & option1 \\
\texttt{2e8510fe} & Elena loves to read books but Jessica does not; \_ bought videos all the time.\newline\emph{Choices:} (option1) Elena; (option2) Jessica. & option2 \\
\texttt{08a09099} & The public elected Jason over Randy, because \_ delivered a less persuasive speech.\newline\emph{Choices:} (option1) Jason; (option2) Randy. & option2 \\
\midrule
\multicolumn{3}{@{}l}{\textit{Multiple choice}} \\
\addlinespace[2pt]
\texttt{11aa364a} & Addison spent a month of lunches and finally found Carson a place; how would you describe Addison?\newline\emph{Choices:} (answer1) helpful; (answer2) meanspirited; (answer3) selfish. & answer1 \\
\texttt{bd59d9db} & Austin got a PS4 Pro and bought a new TV; what will Austin want to do next?\newline\emph{Choices:} (answer1) play his new game console; (answer2) return the TV; (answer3) go watch a movie at the theatre. & answer1 \\
\texttt{2ce5f4f0} & From which part of the plant does a bee get food?\newline\emph{Choices:} (answer1) flower; (answer2) seed; (answer3) stem; (answer4) root. & answer1 \\
\texttt{612a4066} & Remy came up behind Jan and pushed her in the back very hard; how would you describe Remy?\newline\emph{Choices:} (answer1) physical; (answer2) ready to fight Jan; (answer3) very hostile towards Jan. & answer3 \\
\texttt{0cd100f4} & The best way to improve future production yields on the farm is\dots\newline\emph{Choices:} (answer1) planting cabbage one year and spinach the next; (answer2) chemical fertilizers and salts; (answer3) rotating water schedules daily; (answer4) over-watering each field. & answer1 \\
\texttt{25040642} & How do you treat period cramps?\newline\emph{Choices:} (solution1) drink some coffee; (solution2) take some Midol. & solution2 \\
\texttt{ee3abe29} & Paper towel: what is it good for?\newline\emph{Choices:} (solution1) clean telescope; (solution2) operate telescope. & solution1 \\
\midrule
\multicolumn{3}{@{}l}{\textit{Sentence completion}} \\
\addlinespace[2pt]
\texttt{3abb847a} & ``How to start a giving fund'' --- after opening a bank account, what naturally follows?\newline\emph{Choices:} (ending1) choose checking vs.\ savings depending on payment frequency; (ending2) a separate account protects you from debt; (ending3) set up a line for the bank's payout the day before your event; (ending4) it isn't complicated --- consult an individual bank. & ending1 \\
\texttt{873e845e} & Javelin throw scene: as the man releases the javelin, in the background\dots\newline\emph{Choices:} (ending1) bleachers in the background, hot and sunny day; (ending2) large crowd standing around, kites flying; (ending3) a big splash, the man runs to rescue it; (ending4) another man in a black t-shirt easily catches it. & ending1 \\
\texttt{e47d4d28} & Drumming and piano duet: a child drums while a woman plays piano along; they\dots\newline\emph{Choices:} (ending1) continue playing the drums and the music; (ending2) have a small audience watching them perform; (ending3) play till there is no longer a fist drumming in the background; (ending4) play and sing along intently for joy. & ending2 \\
\bottomrule
\end{tabular}
\caption{Sampled Commonsense170k problems used in our experiments, grouped by format. These problems correspond to the closed-source commonsense reasoning experiments reported in Tables~\ref{tab:math500_main_sample} and~\ref{tab:math500_sample_rounds}, where \texttt{gpt-5} and \texttt{gpt-4o} are used as the strong and weak models, respectively. Each instance preserves its native answer format, and ground-truth answers are reported using the corresponding label as released. For multi-option items, we list the choices in the question column.}
\label{tab:commonsense_problems}
\end{table*}

\subsection{Datasets for Open-weight LLMs}
\label{app:dataset:openweight}

For the open-weight LLM experiments, we evaluate on the full MATH-500 benchmark as well as the full evaluation sets of five commonsense reasoning benchmarks: ARC-Challenge, HellaSwag, BoolQ, OpenBookQA, and RTE. These benchmarks cover a range of reasoning settings, including mathematical reasoning, commonsense inference, scientific reasoning, natural language understanding, and multiple-choice question answering. All experiments use \texttt{Qwen3-4B} as the strong model and \texttt{Qwen2.5-1.5B-Instruct} as the weak model.

\section{Additional Experimental Results}
\label{app-addexp}

In this section, we report benchmark-level and per-round results for the open-weight LLM experiments summarized in Section~\ref{openweight}. Tables~\ref{tab:appexp1} and~\ref{tab:appexp2} expand the averaged results reported in Tables~\ref{tab:math500_main} and~\ref{tab:per_round_math500_rounds}.

\paragraph{Per-benchmark robustness.}
Across all five commonsense reasoning benchmarks and topologies, SAC achieves positive BFTI on ARC-Challenge, HellaSwag, and OpenBookQA, with improvements of up to $+7.5\%$ on ARC-Challenge. On BoolQ, SAC remains within $0.6\%$ of its initial accuracy across all topologies, while on RTE it limits degradation to between $-3.4\%$ and $-0.1\%$. By contrast, CP-WBFT exhibits substantial negative BFTI on ARC-Challenge, HellaSwag, OpenBookQA, and RTE, while remaining near zero on BoolQ. The largest gap appears on RTE under the Complete topology, where CP-WBFT yields $-24.2\%$ BFTI while SAC yields only $-0.1\%$. Overall, these results suggest that SAC's receiver-side evaluation generalizes robustly across benchmarks with substantially different reasoning formats and difficulty levels.

\paragraph{Per-group behavior.}
The per-group accuracies reported in Table~\ref{tab:appexp1} show that SAC preserves or improves strong-agent accuracy on ARC-Challenge, HellaSwag, BoolQ, and OpenBookQA while limiting degradation on RTE. Representative examples include $51\to64\%$ on ARC-Challenge with Complete, $30\to34\%$ on HellaSwag with Complete, and $42\to44\%$ on OpenBookQA with Complete. Under CP-WBFT, strong-agent accuracy degrades substantially on ARC-Challenge, HellaSwag, OpenBookQA, and RTE. Weak-agent accuracy under SAC generally changes more modestly, although some degradation remains on BoolQ and RTE. H-Majority is higher under SAC across all benchmark-topology pairs, with particularly large improvements on ARC-Challenge, OpenBookQA, and RTE.

\paragraph{Per-round dynamics.}
Table~\ref{tab:appexp2} reports per-round weak/strong accuracies. Under SAC, strong-agent accuracy improves quickly and remains relatively stable on ARC-Challenge and HellaSwag. For example, strong-agent accuracy increases from $52.1\%$ at initialization to $63.7\%$ at round~6 on ARC-Challenge with MERG, and from $29.7\%$ to $33.6\%$ on HellaSwag with Complete. BoolQ and OpenBookQA show smaller changes across rounds, while RTE exhibits larger round-to-round fluctuations, with strong-agent accuracy recovering to $75.0$--$79.4\%$ by round~6 across the three topologies. In contrast, the aggregated CP-WBFT results over rounds~1--6 show substantial strong-agent degradation on ARC-Challenge, HellaSwag, OpenBookQA, and RTE.

\paragraph{Effect of topology.}
SAC exhibits relatively small performance variation across topologies within the same benchmark. FAA differs by only a few percentage points across all five commonsense reasoning benchmarks. In contrast, CP-WBFT shows larger instability depending on the graph structure, particularly on ARC-Challenge, OpenBookQA, and RTE, where FAA varies substantially across topologies while remaining consistently lower than SAC. This relatively small across-topology variance of SAC is consistent with our theoretical analysis, where the qualitative robustness of SAC primarily depends on satisfying the $(F{+}1)$-robustness condition rather than the specific topology construction.

\begin{table*}[t]
\centering
\footnotesize
\setlength{\tabcolsep}{4pt}
\begin{tabular}{lll rrrr cc r}
\toprule
Dataset & Method & Topology & IAA & FAA & BFTI & RA & W IAA$\to$FAA & S IAA$\to$FAA & H-Maj. \\
\midrule
\multirow{6}{*}{ARC-Challenge} & \multirow{3}{*}{CP-WBFT} & MERG & 43.9 & 34.1 & $-$9.8 & 32.8 & 49$\to$42 & 51$\to$37 & 32.8 \\
 &  & Complete & 44.2 & 28.4 & $-$15.8 & 32.4 & 49$\to$31 & 51$\to$32 & 32.4 \\
 &  & Erd\H{o}s--R\'enyi & 44.5 & 28.8 & $-$15.7 & 32.8 & 49$\to$33 & 52$\to$32 & 32.8 \\
\cmidrule(lr){2-10}
 & \multirow{3}{*}{SAC (Ours)} & MERG & 45.0 & 51.8 & $+$6.8 & 63.9 & 50$\to$51 & 52$\to$64 & 63.5 \\
 &  & Complete & 44.2 & 51.7 & $+$7.5 & 63.5 & 50$\to$51 & 51$\to$64 & 63.5 \\
 &  & Erd\H{o}s--R\'enyi & 44.1 & 50.6 & $+$6.5 & 62.2 & 49$\to$50 & 51$\to$62 & 62.2 \\
\midrule
\multirow{6}{*}{HellaSwag} & \multirow{3}{*}{CP-WBFT} & MERG & 28.9 & 24.4 & $-$4.4 & 23.0 & 30$\to$26 & 30$\to$24 & 23.0 \\
 &  & Complete & 28.6 & 21.3 & $-$7.3 & 22.0 & 31$\to$22 & 30$\to$22 & 22.0 \\
 &  & Erd\H{o}s--R\'enyi & 28.5 & 21.6 & $-$6.9 & 21.9 & 30$\to$22 & 30$\to$22 & 21.9 \\
\cmidrule(lr){2-10}
 & \multirow{3}{*}{SAC (Ours)} & MERG & 28.5 & 30.6 & $+$2.1 & 33.0 & 30$\to$30 & 29$\to$33 & 33.0 \\
 &  & Complete & 28.6 & 30.8 & $+$2.3 & 33.5 & 30$\to$31 & 30$\to$34 & 33.5 \\
 &  & Erd\H{o}s--R\'enyi & 28.9 & 30.6 & $+$1.8 & 33.2 & 30$\to$30 & 30$\to$33 & 33.2 \\
\midrule
\multirow{6}{*}{BoolQ} & \multirow{3}{*}{CP-WBFT} & MERG & 56.1 & 56.5 & $+$0.5 & 59.5 & 57$\to$60 & 60$\to$60 & 59.5 \\
 &  & Complete & 56.0 & 55.6 & $-$0.4 & 59.8 & 57$\to$57 & 60$\to$60 & 59.8 \\
 &  & Erd\H{o}s--R\'enyi & 56.0 & 55.8 & $-$0.1 & 60.0 & 58$\to$58 & 60$\to$59 & 60.0 \\
\cmidrule(lr){2-10}
 & \multirow{3}{*}{SAC (Ours)} & MERG & 55.9 & 55.6 & $-$0.3 & 60.3 & 58$\to$54 & 60$\to$61 & 60.1 \\
 &  & Complete & 56.4 & 55.8 & $-$0.6 & 60.4 & 57$\to$55 & 61$\to$61 & 60.6 \\
 &  & Erd\H{o}s--R\'enyi & 56.1 & 55.7 & $-$0.4 & 61.2 & 58$\to$55 & 60$\to$61 & 61.1 \\
\midrule
\multirow{6}{*}{OpenBookQA} & \multirow{3}{*}{CP-WBFT} & MERG & 36.3 & 29.3 & $-$7.0 & 26.6 & 38$\to$35 & 41$\to$31 & 26.6 \\
 &  & Complete & 36.8 & 24.3 & $-$12.5 & 26.4 & 38$\to$26 & 42$\to$26 & 26.4 \\
 &  & Erd\H{o}s--R\'enyi & 35.6 & 22.4 & $-$13.2 & 24.6 & 39$\to$24 & 40$\to$24 & 24.6 \\
\cmidrule(lr){2-10}
 & \multirow{3}{*}{SAC (Ours)} & MERG & 36.2 & 36.9 & $+$0.7 & 44.6 & 38$\to$37 & 42$\to$44 & 44.4 \\
 &  & Complete & 36.8 & 37.7 & $+$0.9 & 43.6 & 38$\to$37 & 42$\to$44 & 43.6 \\
 &  & Erd\H{o}s--R\'enyi & 36.0 & 36.1 & $+$0.1 & 42.4 & 38$\to$36 & 41$\to$42 & 42.8 \\
\midrule
\multirow{6}{*}{RTE} & \multirow{3}{*}{CP-WBFT} & MERG & 67.9 & 53.7 & $-$14.2 & 52.0 & 74$\to$65 & 79$\to$58 & 52.0 \\
 &  & Complete & 68.1 & 43.8 & $-$24.2 & 50.5 & 74$\to$47 & 79$\to$51 & 50.5 \\
 &  & Erd\H{o}s--R\'enyi & 68.2 & 45.2 & $-$23.0 & 52.0 & 75$\to$50 & 79$\to$51 & 52.0 \\
\cmidrule(lr){2-10}
 & \multirow{3}{*}{SAC (Ours)} & MERG & 68.3 & 64.9 & $-$3.4 & 75.1 & 74$\to$71 & 80$\to$75 & 76.2 \\
 &  & Complete & 68.5 & 68.4 & $-$0.1 & 79.8 & 75$\to$75 & 79$\to$79 & 80.1 \\
 &  & Erd\H{o}s--R\'enyi & 68.1 & 65.8 & $-$2.3 & 77.3 & 74$\to$72 & 79$\to$76 & 78.3 \\
\bottomrule
\end{tabular}
\caption{Full benchmark-level results corresponding to the averaged open-weight LLM results reported in Table~\ref{tab:math500_main}. We report per-dataset comparisons of CP-WBFT and SAC across three $(F{+}1)$-robust communication topologies on the full evaluation sets of five commonsense reasoning benchmarks. Each network contains $n{=}7$ agents (4 strong honest, 2 weak honest, 1 adversarial Byzantine, $F{=}3$, $r{=}4$). \texttt{Qwen3-4B} and \texttt{Qwen2.5-1.5B-Instruct} are used as the strong and weak models, respectively. W~IAA$\to$FAA and S~IAA$\to$FAA denote per-agent accuracy of the weak and strong honest groups before and after consensus; H-Majority is the fraction of queries for which the majority answer among the honest agents matches the ground truth at the final round.}
\label{tab:appexp1}
\end{table*}

\begin{table*}[t]
\centering
\footnotesize
\setlength{\tabcolsep}{4pt}
\begin{tabular}{lll ccc}
\toprule
Dataset & Method & Round & MERG (W/S) & Complete (W/S) & Erd\H{o}s--R\'enyi (W/S) \\
\midrule
\multirow{9}{*}{ARC-C} & \multirow{2}{*}{CP-WBFT} & Init & 49.0 / 51.2 & 49.2 / 51.2 & 49.5 / 51.5 \\
 &  & Rnd 1--6 & 42.0 / 37.5 & 31.4 / 32.4 & 32.6 / 32.4 \\
\cmidrule(lr){2-6}
 & \multirow{7}{*}{SAC (Ours)} & Init & 50.0 / 52.1 & 49.7 / 51.4 & 48.7 / 51.2 \\
 &  & Rnd 1 & 50.2 / 62.7 & 49.8 / 63.0 & 49.3 / 61.5 \\
 &  & Rnd 2 & 50.3 / 63.0 & 50.3 / 63.5 & 50.0 / 61.3 \\
 &  & Rnd 3 & 50.3 / 63.2 & 50.5 / 63.5 & 49.8 / 61.6 \\
 &  & Rnd 4 & 50.5 / 63.3 & 50.7 / 64.0 & 50.0 / 62.0 \\
 &  & Rnd 5 & 50.7 / 63.8 & 50.5 / 63.8 & 50.0 / 62.2 \\
 &  & Rnd 6 & 50.5 / 63.7 & 50.5 / 64.1 & 49.8 / 62.0 \\
\midrule
\multirow{9}{*}{HellaSwag} & \multirow{2}{*}{CP-WBFT} & Init & 30.2 / 30.1 & 30.6 / 30.3 & 30.4 / 29.9 \\
 &  & Rnd 1--6 & 25.9 / 24.5 & 21.6 / 22.0 & 22.1 / 21.9 \\
\cmidrule(lr){2-6}
 & \multirow{7}{*}{SAC (Ours)} & Init & 30.4 / 29.3 & 30.4 / 29.7 & 30.4 / 29.9 \\
 &  & Rnd 1 & 29.7 / 32.1 & 30.8 / 32.9 & 29.6 / 32.7 \\
 &  & Rnd 2 & 29.8 / 33.0 & 30.6 / 33.4 & 30.0 / 33.2 \\
 &  & Rnd 3 & 29.9 / 33.2 & 30.8 / 33.5 & 30.1 / 33.4 \\
 &  & Rnd 4 & 29.9 / 33.2 & 30.6 / 33.6 & 29.9 / 33.2 \\
 &  & Rnd 5 & 29.9 / 33.1 & 30.9 / 33.6 & 29.9 / 33.2 \\
 &  & Rnd 6 & 29.9 / 33.1 & 30.6 / 33.6 & 29.9 / 33.2 \\
\midrule
\multirow{9}{*}{BoolQ} & \multirow{2}{*}{CP-WBFT} & Init & 57.2 / 60.4 & 57.3 / 60.3 & 57.6 / 60.0 \\
 &  & Rnd 1--6 & 60.0 / 59.9 & 56.8 / 59.8 & 58.5 / 59.3 \\
\cmidrule(lr){2-6}
 & \multirow{7}{*}{SAC (Ours)} & Init & 57.5 / 60.0 & 57.5 / 60.9 & 57.5 / 60.4 \\
 &  & Rnd 1 & 52.5 / 61.3 & 52.0 / 61.5 & 52.7 / 62.6 \\
 &  & Rnd 2 & 54.9 / 61.0 & 54.3 / 61.1 & 55.0 / 61.0 \\
 &  & Rnd 3 & 52.6 / 61.5 & 52.0 / 61.1 & 53.1 / 62.8 \\
 &  & Rnd 4 & 54.3 / 61.3 & 54.6 / 60.9 & 55.1 / 61.3 \\
 &  & Rnd 5 & 52.6 / 62.0 & 52.1 / 60.9 & 53.0 / 62.1 \\
 &  & Rnd 6 & 54.3 / 61.0 & 54.5 / 61.2 & 55.0 / 61.1 \\
\midrule
\multirow{9}{*}{OBQA} & \multirow{2}{*}{CP-WBFT} & Init & 38.1 / 41.0 & 38.4 / 42.1 & 38.8 / 40.4 \\
 &  & Rnd 1--6 & 34.6 / 30.6 & 26.2 / 26.4 & 24.5 / 24.4 \\
\cmidrule(lr){2-6}
 & \multirow{7}{*}{SAC (Ours)} & Init & 38.1 / 41.9 & 38.0 / 41.9 & 38.2 / 41.4 \\
 &  & Rnd 1 & 37.2 / 44.0 & 37.5 / 44.9 & 36.8 / 42.1 \\
 &  & Rnd 2 & 37.5 / 43.6 & 37.3 / 44.0 & 36.6 / 42.0 \\
 &  & Rnd 3 & 37.2 / 43.6 & 37.5 / 44.1 & 36.6 / 42.3 \\
 &  & Rnd 4 & 37.3 / 43.9 & 37.3 / 43.6 & 36.6 / 42.5 \\
 &  & Rnd 5 & 37.0 / 43.6 & 37.3 / 43.9 & 36.6 / 42.5 \\
 &  & Rnd 6 & 37.2 / 43.5 & 37.3 / 43.9 & 36.5 / 42.4 \\
\midrule
\multirow{9}{*}{RTE} & \multirow{2}{*}{CP-WBFT} & Init & 74.0 / 78.9 & 74.4 / 79.1 & 74.7 / 79.1 \\
 &  & Rnd 1--6 & 65.3 / 58.4 & 46.6 / 50.5 & 50.0 / 51.2 \\
\cmidrule(lr){2-6}
 & \multirow{7}{*}{SAC (Ours)} & Init & 74.0 / 79.6 & 75.3 / 79.4 & 73.6 / 79.4 \\
 &  & Rnd 1 & 70.4 / 65.3 & 74.9 / 69.9 & 70.6 / 66.9 \\
 &  & Rnd 2 & 71.7 / 77.2 & 74.9 / 77.4 & 72.2 / 76.9 \\
 &  & Rnd 3 & 70.4 / 68.4 & 74.5 / 69.9 & 70.6 / 67.3 \\
 &  & Rnd 4 & 71.5 / 75.8 & 74.9 / 78.1 & 72.2 / 75.9 \\
 &  & Rnd 5 & 70.6 / 67.2 & 74.5 / 68.7 & 70.6 / 67.3 \\
 &  & Rnd 6 & 71.3 / 75.0 & 74.9 / 79.4 & 72.0 / 76.2 \\
\bottomrule
\end{tabular}
\caption{Benchmark-level per-round results corresponding to the averaged open-weight LLM results reported in Table~\ref{tab:per_round_math500_rounds}. We report per-round weak / strong honest accuracy (W/S, in \%) across three $(F{+}1)$-robust communication topologies on the full evaluation sets of five commonsense reasoning benchmarks. Each cell shows the average accuracy of the weak honest group (\texttt{Qwen2.5-1.5B-Instruct}, 2 agents) and the strong honest group (\texttt{Qwen3-4B}, 4 agents) at the corresponding round under $n{=}7$ agents and $F{=}3$.}
\label{tab:appexp2}
\end{table*}

\clearpage
\twocolumn[
\section{Licenses and Terms of Use}
]
\label{app:licenses}
All datasets and models used in this work are publicly available and used in compliance with their respective licenses and terms of use. The Hendrycks MATH dataset~\citep{MATHall} is released under the MIT License. Commonsense170k~\citep{hu2023llm} is released under the Apache 2.0 License. ARC-C~\citep{clark2018arc} is released under the CC BY-SA 4.0 License. HellaSwag~\citep{zellers2019hellaswag} is released under the MIT License. BoolQ~\citep{clark2019boolq} is released under the CC BY-SA 3.0 License. OBQA~\citep{mihaylov2018openbookqa} is released under the Apache 2.0 License. RTE~\citep{rte} is distributed for research use as part of the PASCAL RTE challenge. The Qwen2.5-1.5B-Instruct and Qwen3-4B models are released under the Apache 2.0 License. The GPT models (gpt-3.5-turbo, gpt-4o-mini, gpt-4o, gpt-5) are accessed via the OpenAI API in compliance with OpenAI's Terms of Use. All artifacts are used solely for non-commercial research purposes, consistent with their intended use.

\end{document}